\newtheorem{theorem}{Theorem}
\newtheorem{lemma}{Lemma}
\tikzset{snake it/.style={-stealth,
	decoration={snake, 
			amplitude = 1.5mm,
			segment length = 2.5mm,
			post length=2.9mm},decorate}}
\newcommand{\change}{}
\newcommand{\R}{\mathbb{R}}
\renewcommand{\Re}[1]{\operatorname{Re}\left\{#1\right\}}
\newcommand{\vct}[1]{\boldsymbol{#1}}
\newcommand{\mtx}[1]{\boldsymbol{#1}}
\newcommand{\trace}{\operatorname{trace}}
\DeclareMathOperator*{\minimize}{\text{minimize}}
\newcommand{\va}{\vct{a}}
\newcommand{\vb}{\vct{b}}
\newcommand{\vc}{\vct{c}}
\newcommand{\ve}{\vct{e}}
\newcommand{\vh}{\vct{h}}
\newcommand{\vm}{\vct{m}}
\newcommand{\vs}{\vct{s}}
\newcommand{\vw}{\vct{w}}
\newcommand{\vx}{\vct{x}}
\newcommand{\vy}{\vct{y}}
\newcommand{\vz}{\vct{z}}
\newcommand{\vxi}{\vct{\xi}}
\newcommand{\veta}{\vct{\eta}}
\newcommand{\mA}{\mtx{A}}
\newcommand{\mB}{\mtx{B}}
\newcommand{\mC}{\mtx{C}}
\newcommand{\mX}{\mtx{X}}
\newcommand{\yl}{y_\ell}
\newcommand{\wl}{w_\ell}
\newcommand{\xl}{x_\ell}
\newcommand{\wol}{w^\natural_\ell}
\newcommand{\xol}{x^\natural_\ell}
\newcommand{\bl}{\vb_\ell}
\newcommand{\bltilde}{\tilde{\vb}_\ell}
\newcommand{\bktilde}{\tilde{\vb}_k}
\newcommand{\cl}{\vc_\ell}
\newcommand{\cltilde}{\tilde{\vc}_\ell}
\newcommand{\cktilde}{\tilde{\vc}_k}
\newcommand{\mnat}{\vm^\natural}
\newcommand{\hnat}{\vh^\natural}
\newcommand{\hhatnat}{\hat{\vh}^\natural}
\newcommand{\mhatnat}{\hat{\vm}^\natural}
\newcommand{\hnatt}{\vh^{\natural \intercal}}
\newcommand{\mnatt}{\vm^{\natural \intercal}}
\newcommand{\dm}{\delta \vm}
\renewcommand{\dh}{\delta \vh}
\newcommand{\dmtilde}{\widetilde{\delta \vm}}
\newcommand{\dhtilde}{\widetilde{\delta \vh}}
\DeclareMathOperator{\sign}{sign}
\def\l{\ell}
\newcommand{\ho}{\vh^\natural}
\newcommand{\mo}{\vm^\natural}
\newcommand{\mc}{\vm^\circ}
\newcommand{\hc}{\vh^\circ}
\newcommand{\mt}{\vm^\intercal}
\newcommand{\mot}{\vm^{\natural \intercal}}
\newcommand{\wo}{\vw^\natural}
\newcommand{\xo}{\vx^{\natural}}
\newcommand{\homot}{\ho\mot}
\newcommand{\blt}{\bl^\intercal}
\newcommand{\clt}{\cl^\intercal}
\newcommand{\clone}{c_{\ell1}}
\newcommand{\blone}{b_{\ell1}}
\newcommand{\dmt}{\dm^\intercal}
\newcommand{\cltildet}{\cltilde^\intercal}
\newcommand{\bltildet}{\bltilde^\intercal}
\newcommand{\cktildet}{\cktilde^\intercal}
\newcommand{\bktildet}{\bktilde^\intercal}
\newcommand{\PP}{\mathbb{P}}
\newcommand{\Rhnat}{\mathbf{R}_{\hnat}}
\newcommand{\Rhnatt}{\mathbf{R}_{\hnat}^\intercal}
\newcommand{\Rmnat}{\mathbf{R}_{\mnat}}
\newcommand{\Rmnatt}{\mathbf{R}_{\mnat}^\intercal}
\newcommand{\eoneeonet}{\ve_1 \ve_1^\intercal}
\title{BranchHull: Convex Bilinear Inversion from the Entrywise Product of Signals with Known Signs}
\author{
  Alireza Aghasi\thanks{
  aaghasi@gsu.edu, J. Mack Robinson College of Business, GSU}, Ali Ahmed\thanks{ali.ahmed@itu.edu.pk, Department of Electrical Engineering, ITU, Lahore}, Paul Hand\thanks{p.hand@northeastern.edu, Department of Mathematics and College of Computer and Information Science, Northeastern University}\ \ and Babhru Joshi\thanks{babhru.joshi@rice.edu, Department of Computational and Applied Mathematics, Rice University}
}
\begin{document}

\maketitle
\begin{abstract}
	We consider the bilinear inverse problem of recovering two vectors, $\vx$ and $\vw$, in $\R^L$ from their entrywise product. For the case where the vectors have known signs and belong to known subspaces, we introduce the convex program BranchHull, which is posed in the natural parameter space that does not require an approximate solution or initialization in order to be stated or solved. { Under the structural assumptions that {$\vx$ and $\vw$ are members of} known $K$ and $N$ dimensional random subspaces, we present a recovery guarantee for the noiseless case and a noisy case. In the noiseless case, we prove that the BranchHull recovers $\vx$ and $\vw$ up to the inherent scaling ambiguity with high probability when $L\ {\change \gg\ 2(K+N)}$. The analysis provides a precise upper bound on the coefficient for the sample complexity. In a noisy case, we show that with high probability the BranchHull is robust to small dense noise when $L = \Omega(K+N)$.} BranchHull is motivated by the sweep distortion removal task in dielectric imaging, where one of the signals is a nonnegative reflectivity, and the other signal lives in a known wavelet subspace.  Additional potential applications are blind deconvolution and self-calibration.
\end{abstract}

\section{Introduction}

This paper considers a bilinear inverse problem (BIP): recover vectors $\vx$ and $\vw$ from the observation $\vy = \mathcal{A}(\vx,\vw)$, where $\mathcal{A}$ is a bilinear operator. BIPs have been extensively studied in signal processing and data science literature, and comprise of fundamental problems such as blind deconvolution/demodulation [\cite{ahmed2012blind, stockham1975blind, kundur1996blind,aghasi2016sweep}], phase retrieval [\cite{fienup1982phase}], dictionary learning [\cite{tosic2011dictionary}], matrix factorization [\cite{hoyer2004non, lee2001algorithms}], and self-calibration [\cite{ling2015self}].
   Optimization problems involving bilinear terms and constraints also arise in other contexts, such as blending problems in chemical engineering [\cite{castro2015tightening}]. 

A significant challenge of BIPs is the ambiguity of solutions.  For example, if $(\vx^\natural,\vw^\natural)$ is a solution to a BIP, then so is $(c\vx^\natural,c^{-1}\vw^\natural)$ for any nonzero $c\in\mathbb{R}$.  Other ambiguities may also arise, including the shift ambiguity in blind deconvolution, the permutation ambiguity in dictionary learning, and the ambiguity up to multiplication by an invertible matrix in matrix factorization.  These ambiguities are challenging because they cause the set of solutions to be nonconvex. 

We will consider the fundamental bilinear inverse problem of recovering two $L$ dimensional vectors $\vw$ and $\vx$ from the observations $\vy = \vw \circ \vx$, where $\circ$ denotes the entry-wise product of vectors. This is immediately recognized as the calibration problem, where one is only able to measure a signal $\vx$ modulo unknown multiplicative gains $\vw$. A self-calibration algorithm aims to figure out the gains $\vw$ and the signal $\vx$ jointly from $\vy$.  The circular convolution also becomes pointwise multiplication in the Fourier domain, allowing us to reduce the important blind deconvolution problem in signal processing and wireless communications to a complex case of the above bilinear form.

In addition to the challenges of general BIPs, the BIP above is difficult because the solutions are nonunique without further structural assumptions.  For example $(\wo, \xo)$ and $(1, \wo \circ \xo)$ are both consistent with the entrywise products $\vy = \wo \circ \xo$.  While multiple structural assumptions are reasonable, we will consider the case where $\wo$ and $\xo$ belong to known subspaces $\mB$ and $\mC$, as in \cite{ahmed2012blind}.  In addition, we also require $\vw^\natural$ and $\vx^\natural$ to be real and of known signs. The method can be extended to complex vectors in the case of known complex phases.

The known sign information in the real case is justified in imaging applications, where we want to recover image pixels (always non-negative) from occlusions caused by unknown multiplicative masks {\change [\cite{Chen2006variation}]}. A stylized application of this setup also arises in the wireless communications. A source encodes a message as a series of positive magnitude shifts on tones at frequencies $f_1, f_2,\ldots, f_L$. These real valued and positive $\vx = [x(f_1), x(f_2), \ldots, x(f_L)]^\top$ are transmitted over a linear-time invariant channel, where $x(f_\ell)$ are weighted by the frequency response of the channel $w(f_\ell)$ (in general complex valued), and in the ideal noiseless case, the receiver ends up observing $y(f_\ell) = x(f_\ell)\cdot w(f_\ell)$. The real part of the complex-valued measurements $\Re{y(f_\ell)}= x(f_\ell) \cdot \Re{w(f_\ell)}$ are simply the pointwise product of two unknown real numbers with known signs. In addition, in this application, the vectors $\vx$, and $\vw$ naturally live in low-dimensional subspaces; for details, see \cite{ahmed2012blind} and \cite {ahmed2016leveraging}. 

The assumptions of sign and subspace measurements are strongly motivated by the sweep-distortion removal problem in dielectric imaging [\cite{aghasi2016sweep}].  In this problem, a dielectric is imaged, and the pointwise product of an electromagnetic pulse and the reflectivity pattern is observed.  The {\change signal's } nonnegativity follows from {\change nonnegativity} of the material's reflectivity, and the pulse belongs to a subspace defined by dominant wavelet coefficients of the image.

We consider the following bilinear inverse problem in the presence of multiplicative noise given by the vector $\boldsymbol{1} +\vxi$:
\begin{alignat}{2}
&\text{Let: } &&\wo \in \mB\subset \R^L, \ \xo \in \mC \subset \R^L, \vxi \in \R^L, \ \vs = \sign(\wo) \notag\\
& && \vy = \wo \circ \xo \circ (\boldsymbol{1}+\vxi), \label{signal-recovery-problem}\\
&\text{Given: } &&\vy,\vs,\mB,\mC \notag \\ 
&\text{Find: } &&\wo, \xo \text{ up to the scaling ambiguity}  \notag
\end{alignat}

One standard way to solve the BIP above\footnote{As stated, this approach ignores the sign information.} is to convexify it by lifting.  More specifically, the bilinear inverse problem can be recast as a linear matrix recovery problem with the structural constraint that the recovered matrix is rank one. With $\wol = \blt \ho$ and $\xol = \clt \mo$ for $\l = 1, \ldots, L$,  the underlying linear operator is given by $y_\ell = \wol \xol = \blt \ho \mot \cl  = \langle \bl \clt, \ho \mot \rangle=\mathcal{A}_\ell(\ho\mot)$, 
and the formal recovery framework is {\change to} find the $\mX$ of minimal rank that is consistent with $\mathcal{A}(\mX) = \vy$. 
By relaxing the rank objective to the nuclear norm of $\mX$, this optimization problem becomes a semidefinite program. The results in \cite{ahmed2012blind}, which apply to the complex case, show that when $\bl$ and $\cl$ are Fourier and Gaussian vectors, respectively, this semidefinite program  succeeds in recovering the rank-1 matrix $\ho\vm^{\natural \top}$ with high probability, whenever $K+N\lesssim L/\log^3L$.  Unfortunately, directly optimizing a lifted problem is prohibitively computationally expensive, as the lifted semidefinite program is posed on a space of dimensionality $K \times N$, which is much larger than the $K+N$ dimensionality of the natural parameter space.

To address the intractability of lifted methods,  a recent theme of research has been to solve quadratic and bilinear recovery problems in the natural parameter space using alternating minimization and gradient descent algorithms [\cite{netrapalli2013phase, sun2016geometric}].  These algorithms include the Wirtinger Flow (WF) and its variants for phase retrieval {\change [\cite{candes2014phase, CC15, wang2016solving}]}. A Wirtinger gradient descent method was recently introduced for blind deconvolution in \cite{li2016rapid}.  In the case that $\bl$ are deterministic complex matrices that satisfy an incoherence property and that $\cl$ are Gaussian vectors, this nonconvex method succeeds at recovering $\ho$ and $\mo$ up to the scale ambiguity with high probability when $K+N \lesssim L/\log^2 L$.   While WF based methods enjoy rigorous recovery guarantees under optimal or nearly optimal sample complexity with suitable measurement models,  the proofs of these results are long and technical.  Also, because of the nonconvexity of the problem, the convergence of a {\change gradient descent algorithm} to the global minimum usually relies on an appropriate initialization [\cite{tu2015low, CC15, li2016rapid}].

The approach we will present in this paper will combine strengths of both of these approaches.  Specifically, we introduce a convex formulation in the natural parameter space for the bilinear inverse problem of recovering two real vectors from their entrywise product, provided that the vectors live in known subspaces and have known signs.  This convex formulation is called BranchHull and does not involve an initialization or approximate solution in order to be posed or solved. BranchHull is based on the following idea:  The bilinear measurements $\xl \wl = \yl$ establish that $(\xl, \wl)$ is on one of two branches of a hyperbola in $\R^2$.  Information on $\sign(\wl)$ identifies the appropriate branch.  The convex formulation is then formed by relaxing this nonconvex branch of a hyperbola to its convex hull, as shown in Figure~\ref{fig:hyperbola-convex-hull}. 
 We consider the case where the two vectors live in random subspaces of $\R^L$ with dimensions $K$ and $N$. Under this assumption on $\xo$ and $\wo$, with noise $\vxi$ that does not change the sign of the measurements, we establish that the Euclidean recovery error 
is bounded by the $\l_\infty$ norm of the noise. This result holds with high probability for $K+N \lesssim L$. In the noiseless case, we provide an explicit lower bound on the recovery probability that is nonzero when {\change $L > 2(K+N)-3$}.

\begin{figure}[t]\centering 
	\begin{overpic}[ width=0.93\textwidth,height=0.5\textwidth,tics=1]{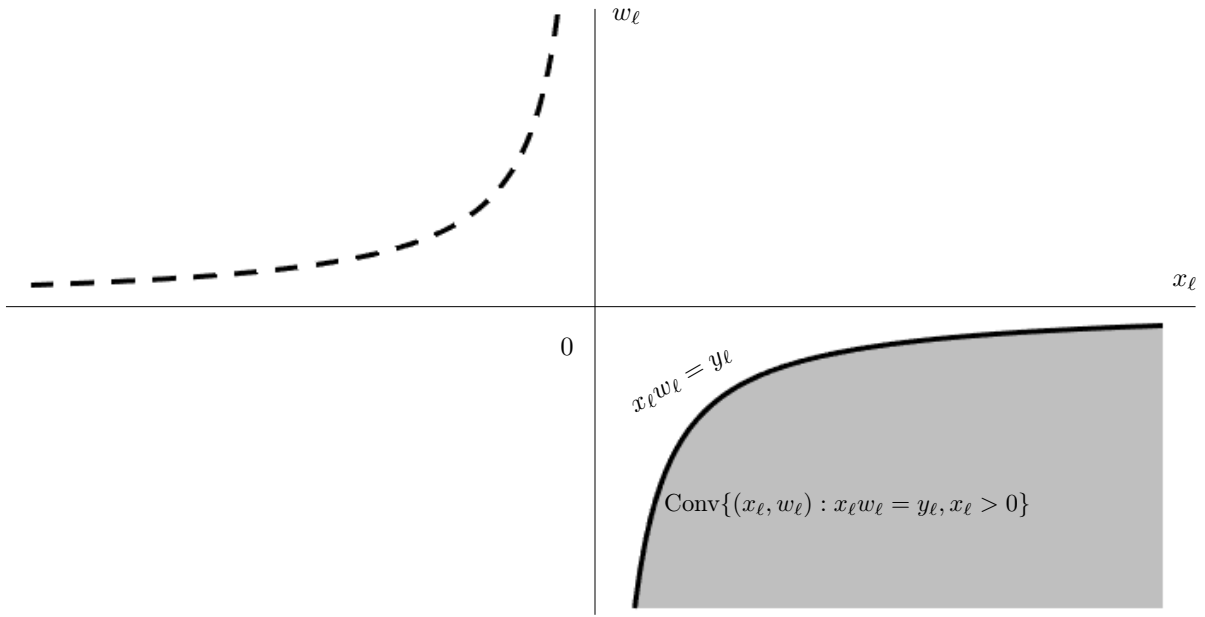}
		\linethickness{0.3pt}
		\put(-1,27.2){\color{black}\line(1,0){103}}
		\put(50,.5){\color{black}\line(0,1){52.5}}
		\put(47,23){$0$}
		\put(51.5,52){$w_\ell$}
		\put(100,29){$x_\ell$}
		\put(53,18){\rotatebox{30}{$x_\ell w_\ell =y_\ell $}}
		\put(56,9.5){\rotatebox{0}{\scalebox{.93}{$\mbox{Conv}\!\left\{(x_\ell,w_\ell): x_\ell w_\ell =y_\ell, x_\ell>0\right\}$}}}
	\end{overpic}
	\caption{\small Given the bilinear measurement $\xl \wl = \yl$, the point $(\xl, \wl)$ is on a two-branch hyperbola, as depicted by the dashed and solid lines.  Further information on the sign of $\wl$ identifies which branch of the hyperbola the point is on (the solid line).  The convex formulation in this paper replaces the relevant branch of the hyperbola with its convex hull (the shaded region).  }
	\label{fig:hyperbola-convex-hull}
\end{figure}

\subsection{Problem Formulation}
We consider the bilinear inverse problem of recovering two vectors from their entrywise product.  That is, let $\wo,\xo,\vxi \in \R^L$, and let $\vy = \wo \circ \xo\circ ({\bf 1}+\vxi)$, where $\vxi$ corresponds to noise.  From $\vy$, we attempt to find $\wo$ and $\xo$ up to the scaling ambiguity $(c\wo, \frac{1}{c} \xo)$.  To make the problem well posed, we consider the case where $\wo$ and $\xo$ belong to known subspaces $\mB$ and $\mC$ of $\R^L$.  We further consider the case where the signs of the entries of $\wo$, and hence those of $\xo$, are known.  Let $\vs = \sign(\wo)$.   This bilinear inversion problem is stated in \eqref{signal-recovery-problem}.

Ideally, we could resolve the scaling ambiguity and find $(\wo, \xo)$ such that $\|\wo\|_2 = \|\xo\|_2$  by solving the following program:
\begin{align*}
\minimize_{\vw\in \mB, \ \vx\in \mC} \ \|\vw\|_2^2+\|\vx\|_2^2  \text{ subject to }  &\wl \xl = \yl\\[-.5em]
&s_\l \wl \geq 0, ~ \l = 1, \ldots, L. 
\end{align*}
This program is nonconvex, but it admits the following convex relaxation:
\begin{align*}
\minimize_{\vw \in \mB, \ \vx \in \mC} \ \|\vw\|_2^2+\|\vx\|_2^2  \text{ subject to } &\sign(\yl) \wl \xl \geq |\yl| \\[-.5em]
&s_\l \wl \geq 0, ~ \l = 1, \ldots, L. 
\end{align*}
Note that for fixed $\ell$, the feasible set $\{(\wl, \xl) \mid \sign(\yl)\wl \xl \geq | \yl|, s_\l \wl \geq 0 \}$ is the convex hull of $\{(\wl, \xl) \mid \wl \xl =  \yl, s_\l \wl \geq 0 \}$.

We consider this problem when written in the natural parameter space.  {\change Despite the abuse of notation,} Let $\mB \in \R^{L \times K}$ be a matrix that spans the $K$ dimensional subspace $\mB$.  Similarly, let $\mC \in \R^{L \times N}$ be a matrix that spans the $N$ dimensional subspace $\mC$.  Let $(\ho, \mo) \in \R^K \times \R^N$.  Let $\wo = \mB\ho$ and $\xo = \mC \mo$.       We can write $\wl = \blt \vh$, $\xl = \clt \vm$, and $\yl = \langle \bl \clt, \homot \rangle$, where $\blt$ is the $\ell$th row of $\mB$ and $\clt$ is the $\ell$th row of $\mC$.  The recovery task is now to find $(\ho, \mo)$ by the convex program called BranchHull
\begin{align}
\minimize_{\vh \in \R^K, \ \vm \in \R^N} \ \|\vh\|_2^2+\|\vm\|_2^2  \text{ subject to } &\sign(\yl) \blt \vh \cdot \clt \vm \geq |\yl| \tag{BH} \label{bh}\\
&s_\l  \cdot \blt \vh \geq 0, ~ \l = 1, \ldots, L. \notag
\end{align}
This program is convex because for any fixed $\l$, the points consistent with both the first and second constraints is a convex set. This program has $K+N$ variables, $L$ linear inequality constraints, and $L$ nonlinear inequality constraints. Because the scaling $(c \wo, \frac{1}{c}\xo)$ is consistent with the constraints for positive $c$, the program will return a solution where $\|\vh\|_2 = \|\vm\|_2$.  Thus, if recovery is successful in the noiseless case, the optimal solution is $\Biggl(\ho \sqrt{\frac{\|\mo\|_2}{\|\ho\|_2}}, \mo \sqrt{\frac{\|\ho\|_2}{\|\mo\|_2}} \Biggr)$. We implement the same convex program \eqref{bh} in the noisy case if the noise $\vxi$ does not alter the sign of the measurement $\vy$. This occurs when $\xi_{\l} \geq -1$ for all $\l \in [L]$. For the case where noise alters sign or outlier case, see the discussion section for a modified program that is conjectured to tolerate sign change and significant outliers.

\subsection{Main Results}
In this paper, we consider the bilinear recovery problem \eqref{signal-recovery-problem},  where the subspaces given by $\mB$ and $\mC$ are random.  Specifically, we show that if $\mB$ and $\mC$ have i.i.d. Gaussian entries, then exact recovery of $(\ho, \mo)$ is possible in the noiseless case with nonzero probability when there are {\change at least 2 times} as many measurements as degrees of freedom. 

\begin{theorem}[Noiseless Case] \label{thm:sd}
	Fix $(\ho, \mo) \in \R^{K} \times \R^{N}$ such that $\ho \neq 0$ and $\mo \neq 0$.  Let $\mB\in\R^{L\times K},\mC \in\R^{L\times N}$ have i.i.d. $\mathcal{N}(0, 1)$ entries and $\vxi = 0$.  Then $\Biggl(\ho \sqrt{\frac{\|\mo\|_2}{\|\ho\|_2}}, \mo \sqrt{\frac{\|\ho\|_2}{\|\mo\|_2}} \Biggr)$ is the unique solution to \eqref{bh} with probability at least $$
	1 - \exp\Biggl( - \frac{\bigl[L - {\change (2N + 2K - 3)}\bigr]^2}{2(L-1)} \Biggr), $$
	provided that {\change $L > 2N + 2K - 3$}.
\end{theorem}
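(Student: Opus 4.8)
The plan is to certify that the balanced point is the unique optimum by exhibiting a nonnegative dual combination, and then to show that such a combination exists with the stated probability via Wendel's classical theorem on random convex hulls. Since the feasible set of \eqref{bh} is convex and the objective $f(\vh,\vm)=\|\vh\|_2^2+\|\vm\|_2^2$ is $1$-strongly convex, it suffices to verify the first-order optimality inequality $\langle\nabla f(z^\star),\,z-z^\star\rangle\ge 0$ for every feasible $z$, where $z^\star=(\ho,\mo)$ after rescaling so that $\|\ho\|_2=\|\mo\|_2$ (a rescaling that alters neither $\vy$, nor the signs, nor the program); strong convexity then upgrades this to strict inequality for $z\ne z^\star$, giving uniqueness. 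Writing $\wl=\blt\vh,\ \xl=\clt\vm$ and $w^\star_\ell=\blt\ho,\ x^\star_\ell=\clt\mo$, the elementary observation driving the certificate is that at any feasible point the constraints $\sign(\yl)\wl\xl\ge|\yl|$ and $s_\ell\wl\ge 0$ force $\wl/w^\star_\ell\ge 0$ and $(\wl/w^\star_\ell)(\xl/x^\star_\ell)=\wl\xl/\yl\ge 1$, whence AM--GM gives $\wl/w^\star_\ell+\xl/x^\star_\ell\ge 2$, i.e. $\langle\vg_\ell,\,z-z^\star\rangle\ge 0$ for the vector $\vg_\ell:=(\bl/w^\star_\ell,\ \cl/x^\star_\ell)\in\R^{K+N}$.

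Consequently, if I can produce multipliers $\mu_\ell\ge 0$ with $\sum_\ell\mu_\ell\vg_\ell=(2\ho,2\mo)=\nabla f(z^\star)$, then $\langle\nabla f(z^\star),z-z^\star\rangle=\sum_\ell\mu_\ell\langle\vg_\ell,z-z^\star\rangle\ge 0$ for all feasible $z$, finishing the argument. Thus the whole problem collapses to a conic feasibility question: is $(2\ho,2\mo)\in\mathrm{cone}\{\vg_1,\dots,\vg_L\}$? The structural fact that makes this tractable is that every $\vg_\ell$ has the \emph{same} inner product with $(\ho,\mo)$, namely $\langle\vg_\ell,(\ho,\mo)\rangle=w^\star_\ell/w^\star_\ell+x^\star_\ell/x^\star_\ell=2$, so all the $\vg_\ell$ lie on one affine hyperplane. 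Taking inner products of $(2\ho,2\mo)=\sum_\ell\mu_\ell\vg_\ell$ with $(\ho,\mo)$ pins down $\sum_\ell\mu_\ell$, and a short computation then shows $(2\ho,2\mo)\in\mathrm{cone}\{\vg_\ell\}$ if and only if the origin lies in the convex hull of the recentered points $\ve_\ell:=\vg_\ell-\big(\ho/\|\ho\|_2^2,\ \mo/\|\mo\|_2^2\big)$.

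The next step is to identify the law of the $\ve_\ell$. Decomposing $\bl=a_\ell\,\ho/\|\ho\|_2+\bl^\perp$ with $a_\ell=\langle\bl,\ho\rangle/\|\ho\|_2$ and $\bl^\perp\perp\ho$, one computes $\bl/w^\star_\ell=\ho/\|\ho\|_2^2+\bl^\perp/(a_\ell\|\ho\|_2)$, and similarly for the $\cl$ block, so the deterministic parts cancel exactly in the balanced normalization and $\ve_\ell=\big(\bl^\perp/(a_\ell\|\ho\|_2),\ \cl^\perp/(b_\ell\|\mo\|_2)\big)$. These are i.i.d. across $\ell$, admit a density on the $(K+N-2)$-dimensional subspace $\ho^\perp\times\mo^\perp$, and — because $\bl^\perp,\cl^\perp$ are Gaussian, hence sign-symmetric and independent of the scalars $a_\ell,b_\ell$ — are symmetric about the origin. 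This symmetry, which survives the nonlinear rescaling by $1/(\blt\ho)$, is exactly the hypothesis of Wendel's theorem. I would flag verifying this symmetry, together with recognizing the convex-hull reformulation in the first place, as the main conceptual obstacle; the certificate computation preceding it is deterministic, and everything following it is a standard tail estimate.

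Finally, Wendel's theorem gives $\PP\big[0\notin\mathrm{conv}\{\ve_\ell\}\big]=2^{1-L}\sum_{k=0}^{K+N-3}\binom{L-1}{k}=\PP\big[\mathrm{Bin}(L-1,1/2)\le K+N-3\big]$, so the probability of exact recovery is at least one minus this binomial tail. Bounding the tail by the $\le K+N-2$ tail and applying Hoeffding's inequality produces precisely $\exp\!\big(-[L-(2K+2N-3)]^2/(2(L-1))\big)$, with the hypothesis $L>2K+2N-3$ being exactly the condition that makes this Hoeffding exponent nontrivial. This yields the theorem.
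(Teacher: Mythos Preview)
Your proof is correct, and at its core it coincides with the paper's argument: both linearize the hyperbolic constraint at $(\ho,\mo)$ to obtain the inequality $\wl/w^\star_\ell+\xl/x^\star_\ell\ge 2$ (you via AM--GM, the paper via the tangent-hyperplane containment in Lemma~\ref{lem:sd-to-sdlc}), both then reduce optimality to a statement about the vectors $(\bltilde/\blone,\cltilde/\clone)\in\R^{K+N-2}$, and both finish with Wendel's combinatorial identity followed by Hoeffding. The difference is purely in the framing of the middle step: the paper argues on the \emph{primal} side, proving a covering statement ``for every perturbation direction there exists an $\ell$ with nonpositive inner product'' (Lemma~\ref{lem:sdlc-sufficient-condition} and Lemma~\ref{lem:sphere-covering}), whereas you argue on the \emph{dual} side, producing nonnegative multipliers $\mu_\ell$ with $\sum_\ell\mu_\ell\vg_\ell=\nabla f(z^\star)$ and recasting this as $0\in\mathrm{conv}\{\ve_\ell\}$. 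These two conditions are Farkas-equivalent, and Wendel's theorem computes the probability of either one. Your dual-certificate packaging is arguably more streamlined since it bypasses the auxiliary linearly-constrained program \eqref{bilc} entirely, and the observation that $\langle\vg_\ell,(\ho,\mo)\rangle\equiv 2$ forces all the $\vg_\ell$ onto a common affine hyperplane (collapsing conic to convex feasibility) is a clean way to see why the effective dimension drops to $K+N-2$; the paper's covering formulation, on the other hand, makes the geometric picture of hemispheres more explicit and reuses the same machinery later in the noisy proof.
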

This theorem provides an explicit lower bound on the recovery probability by the convex program \eqref{bh}.  If $L > {\change 2N+2K-3}$, there is a nonzero probability of successful recovery.  By taking $L \geq \tilde{C} (N+K)$, the probability of failure becomes at most $e^{-\tilde{c} L}$, for universal constants $\tilde{C}$ and $\tilde{c}$.  The scaling of $L$ in terms of $N+K$ is information theoretically optimal up to a constant factor.  The proof of Theorem~\ref{thm:sd} follows from estimating the probability of covering a sphere by random  hemispheres chosen from a nonuniform distribution.

Now we will state a result that the convex program \eqref{bh} is robust to small dense noise. Let
\begin{align}
		 \epsilon &= \|\vxi\|_{\infty} \label{epsilon}
\end{align}
represent the noise level. In particular, we present a recovery theorem for $\epsilon \leq 1$. Under this assumption on the noise, we show that if the matrices $\mB$ and $\mC$ have i.i.d. Gaussian entries and there are $O(K+N)$ measurements, then the minimizer of \eqref{bh} is close to $\left( \ho \sqrt{\frac{\|\mo\|_2}{\|\ho\|_2}},\mo\sqrt{\frac{\|\ho\|_2}{\|\mo\|_2}}\right)$ with high probability.   

\begin{theorem}[Noisy Case]\label{thm:noise}
	Fix $(\ho,\mo) \in \mathbb{R}^K \times \mathbb{R}^N$ such that $\ho \neq 0$ and $\mo \neq 0$. Let $\mB\in \mathbb{R}^{L\times K}$, $\mC\in \mathbb{R}^{L\times N}$ have i.i.d. $\mathcal{N}(0,1)$ entries. Let $\epsilon$ be as defined in \eqref{epsilon}. Let $\vy \in \mathbb{R}^L$ contain measurements that satisfy \eqref{signal-recovery-problem} with $\epsilon \in [0,1]$. If $L \geq C(K+N)$ then the unique minimizer $(\vh^*, \vm^*)$ of the BranchHull program \eqref{bh} satisfies
	\begin{equation}
		\left(\left\|\vh^* - \ho\sqrt{\frac{\|\mo\|_2}{\|\ho\|_2}}\right\|_2^2+\left\|\vm^* - \mo \sqrt{\frac{\|\ho\|_2}{\|\mo\|_2}}\right\|_2^2\right)^\frac{1}{2}\nonumber \leq 4\sqrt{\epsilon}\sqrt{\|\ho\|_{2}\|\mo\|_{2}}
	\end{equation} 
	with probability at least $1-e^{-cL}$. Here, $C$ and $c$ are absolute constants.
\end{theorem}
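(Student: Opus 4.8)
The plan is to measure the BranchHull minimizer $(\vh^*,\vm^*)$ against the exactly balanced ground truth $(\vh_0,\vm_0):=\big(\ho\sqrt{\|\mo\|_2/\|\ho\|_2},\,\mo\sqrt{\|\ho\|_2/\|\mo\|_2}\big)$, which satisfies $\|\vh_0\|_2=\|\vm_0\|_2=\sqrt R$ with $R:=\|\ho\|_2\|\mo\|_2$ and has objective value $2R$. The point $(\vh_0,\vm_0)$ is generally infeasible when the noise is positive, but the inflated point $(\sqrt{1+\epsilon}\,\vh_0,\sqrt{1+\epsilon}\,\vm_0)$ is feasible: scaling both factors by $\sqrt{1+\epsilon}$ multiplies each product $\blt\vh\,\clt\vm$ by $1+\epsilon\ge 1+\xi_\ell$, so $\sign(y_\ell)\,\blt\vh\,\clt\vm\ge|\wol\xol|(1+\xi_\ell)=|y_\ell|$, while the sign constraints are scale invariant. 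Hence $\|\vh^*\|_2^2+\|\vm^*\|_2^2\le 2(1+\epsilon)R$, and writing $(\dh,\dm)=(\vh^*-\vh_0,\vm^*-\vm_0)$ and expanding the square yields the energy inequality
\begin{equation}
\|\dh\|_2^2+\|\dm\|_2^2+2\langle\vh_0,\dh\rangle+2\langle\vm_0,\dm\rangle\le 2\epsilon R.
\label{eq:energy}
\end{equation}
Setting $Q:=\|\dh\|_2^2+\|\dm\|_2^2$ and $P:=\langle\vh_0,\dh\rangle+\langle\vm_0,\dm\rangle$, everything reduces to a lower bound on $P$: if $P\ge0$ then already $Q\le2\epsilon R$, and a bound of the form $P\ge -c\,\epsilon R$ gives the claimed $O(\sqrt{\epsilon R})$ error.

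To lower bound $P$ I would turn the feasibility of $(\vh^*,\vm^*)$ into information about $P$. With $a_\ell=\blt\vh_0$, $b_\ell=\clt\vm_0$ and $\sigma_\ell=\sign(y_\ell)=\sign(a_\ell b_\ell)$ (the noise does not flip signs), the first constraint rearranges to
\begin{equation}
\sigma_\ell\big(b_\ell\,\blt\dh+a_\ell\,\clt\dm\big)+\sigma_\ell(\blt\dh)(\clt\dm)\ge -\,|a_\ell b_\ell|\,\epsilon,\qquad \ell=1,\dots,L.
\label{eq:feas}
\end{equation}
Averaging \eqref{eq:feas} over $\ell$ and passing to expectations (using that $\bl$ and $\cl$ are independent, that $\E[\sign(\blt\vh_0)\,\blt\dh]=\sqrt{2/\pi}\,\langle\vh_0,\dh\rangle/\|\vh_0\|_2$, and that $\E|\clt\vm_0|=\sqrt{2/\pi}\,\|\vm_0\|_2$) shows that the linear average has mean $\tfrac2\pi P$ and that $\tfrac1L\sum_\ell|a_\ell b_\ell|$ has mean $\tfrac2\pi R$; the bilinear average $\tfrac1L\sum_\ell\sigma_\ell(\blt\dh)(\clt\dm)=\dh^{\top}M\dm$ with $M=\tfrac1L\sum_\ell\sigma_\ell\bl\clt$ has mean $\dh^{\top}(\E M)\dm$, where $\E M=\tfrac2\pi\,\vh_0\vm_0^{\top}/R$ is rank one. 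Thus in expectation \eqref{eq:feas} reads $\tfrac2\pi P\gtrsim-\tfrac2\pi\epsilon R$, i.e.\ exactly the required $P\gtrsim-\epsilon R$.

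The \emph{main obstacle} is that $(\dh,\dm)$ is a random, a priori unbounded direction that itself depends on $\mB,\mC$, so the passage to expectations must be made uniform. Since all three averages are homogeneous in $(\dh,\dm)$, it suffices to control their deviations on the unit spheres of $\R^K$ and $\R^N$, which I would do with probability $1-e^{-cL}$ by Gaussian concentration together with a net/chaining argument, obtaining: (i) the linear average stays within $c_1\sqrt{(K+N)/L}\,\sqrt R\,(\|\dh\|_2+\|\dm\|_2)$ of $\tfrac2\pi P$; (ii) $\tfrac1L\sum_\ell|a_\ell b_\ell|$ stays within $o(1)$ of $\tfrac2\pi R$; and (iii) the bilinear remainder is controlled through the operator-norm bound $\|M-\E M\|\le c_2\sqrt{(K+N)/L}$ for the random $K\times N$ matrix $M$. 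The hypothesis $L\ge C(K+N)$ is precisely what makes each deviation a small multiple of $\sqrt R\,\sqrt Q$ or of $Q$.

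Finally I would add $\pi$ times the concentrated form of \eqref{eq:feas} to \eqref{eq:energy}; the $\pi\cdot\tfrac2\pi P=2P$ contribution cancels the $2P$ term in \eqref{eq:energy}, leaving
\[
Q\le 4\epsilon R+\pi\,\dh^{\top}M\dm+\pi\,(\text{linear deviation}),
\]
with the deviations bounded by $c_1\sqrt{(K+N)/L}\,\sqrt R\,\sqrt Q+c_2\sqrt{(K+N)/L}\,Q$. Taking $C$ large forces $c_2\sqrt{(K+N)/L}<\tfrac12$ and absorbs the $\sqrt Q$ term by Young's inequality, giving $Q\le 16\epsilon R$, which is the stated bound $\big(\|\dh\|_2^2+\|\dm\|_2^2\big)^{1/2}\le 4\sqrt{\epsilon}\sqrt{\|\ho\|_2\|\mo\|_2}$. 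I expect the genuinely delicate point to be the rank-one mean $\tfrac2\pi\,\langle\vh_0,\dh\rangle\langle\vm_0,\dm\rangle/R$ of the bilinear remainder, which is only $O(Q)$ and could spoil the closing constant; I would handle it by splitting off the radial (scaling) component of $(\dh,\dm)$ along $(\vh_0,\vm_0)$ — for which \eqref{eq:energy} already forces $O(\sqrt{\epsilon R})$ size — so that on the complementary directions this product is truly higher order and the final inequality closes strictly.
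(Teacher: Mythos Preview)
Your approach is genuinely different from the paper's, and it contains a gap that prevents it from closing at the stated sample complexity.

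\textbf{The gap.} After averaging the feasibility inequalities and adding to the energy inequality, you arrive at
\[
Q \;\le\; 4\epsilon R \;+\; \pi\,\dh^{\top}M\dm \;+\; \pi\,(\text{linear deviation}),
\]
and bound the linear deviation by $c_1\sqrt{(K+N)/L}\,\sqrt{R}\,\sqrt{Q}$. Your plan is to absorb this with Young's inequality. But Young gives
\[
c_1\sqrt{(K+N)/L}\,\sqrt{R}\,\sqrt{Q}\;\le\;\tfrac14 Q \;+\; c_1^2\,\frac{(K+N)}{L}\,R,
\]
and with $L\ge C(K+N)$ the residual is $c_1^2 R/C$ --- a \emph{fixed fraction of $R$}, independent of $\epsilon$. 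So what you actually obtain is $Q\lesssim \epsilon R + R/C$, which is $O(R)$, not $O(\epsilon R)$, for any absolute constant $C$. To make this route work you would need $L\gtrsim (K+N)/\epsilon$, which is not the claimed complexity. (The variance calculation confirming this scale: each summand $\sign(a_\ell)\,|b_\ell|\,\bl$ has per-coordinate variance $\Theta(R)$, so the coefficient vector $\tfrac1L\sum_\ell\sign(a_\ell)|b_\ell|\bl$ deviates from its mean by $\Theta(\sqrt{RK/L})$ in norm.) The bilinear mean term $\tfrac{2}{\pi}pq/R$ is a second, separate obstruction: it can equal $\Theta(Q)$ with the wrong sign, and the radial/tangential split you sketch does not by itself force the radial projections $p,q$ to be $O(\sqrt{\epsilon R})$---the energy inequality \eqref{eq:energy} only controls $p+q$, not $p$ and $q$ individually.

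\textbf{What the paper does instead.} Rather than averaging, the paper exploits a \emph{single well-chosen constraint}. After rotating so that $\ho=\mo=\ve_1$, it first treats the one-sided case $\xi_\ell\in[-1,0]$ (so $(\ho,\mo)$ is itself feasible). The feasibility inequality at index $\ell$ rearranges to
\[
\delta h_1+\delta m_1+\delta h_1\delta m_1 \;+\;(\text{terms involving }\dhtilde,\dmtilde)\;\ge\;-\epsilon,
\]
and a sphere-covering lemma (for every direction $(\dhtilde,\dmtilde)$ there exist indices $\ell$ with $\sign(b_{\ell1})\bltildet\dhtilde\le0$ and $\sign(c_{\ell1})\cltildet\dmtilde\le0$, etc.) guarantees, with probability $1-e^{-cL}$ when $L\gtrsim K+N$, an $\ell$ for which those extra terms are $\le0$. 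Together with $\delta h_1\ge-1$ (from the sign constraint at another such $\ell$) this yields $\delta h_1+\delta m_1\ge -2(1-\sqrt{1-\epsilon})$, and the objective comparison finishes. The general two-sided noise is then reduced to the one-sided case by writing $y_\ell = s\,\hat y_\ell(1+\eta_\ell)$ with $s=\max\{1,1+\max_\ell\xi_\ell\}$, so that $\eta_\ell\in[-1,0]$; one applies the one-sided bound around $(\sqrt{s}\,\ho,\sqrt{s}\,\mo)$ and closes with the triangle inequality. The point is that by \emph{selecting} $\ell$ direction-by-direction, the paper never incurs the $O(\sqrt{R/C})$ fluctuation that is fatal to an averaging argument.
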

{\change In Theorem \ref{thm:noise}, the $\ell_2$ recovery error depends on the noise level as the square root of $\epsilon$. We suspect that this square root dependence in the power of $\epsilon$ is an artifact of the the proof technique and numerical simulations presented in Figure \ref{vary_noise_level} suggests the recovery errors, for small noise, behaves linearly in $\epsilon$.}

\subsection{Discussion}

The BranchHull formulation is a novel convex relaxation for the bilinear recovery from the entrywise product of  vectors with known signs, and it enjoys a recovery guarantee when those vectors belong to random real subspaces of appropriate dimensions.   The formulation is nothing more than finding which point of a convex set is closest to the origin.   Geometrically, exact recovery is possible by $\ell_2$-norm minimization because the feasible set of $(\vh, \vm)$ has a `pointy' ridge that corresponds to the fundamental scaling ambiguity, as illustrated in Figure~\ref{fig:pointy-ridge}.

\begin{figure}[t]\centering 
	\begin{overpic}[scale = 0.1,trim= 0 400 0 0]{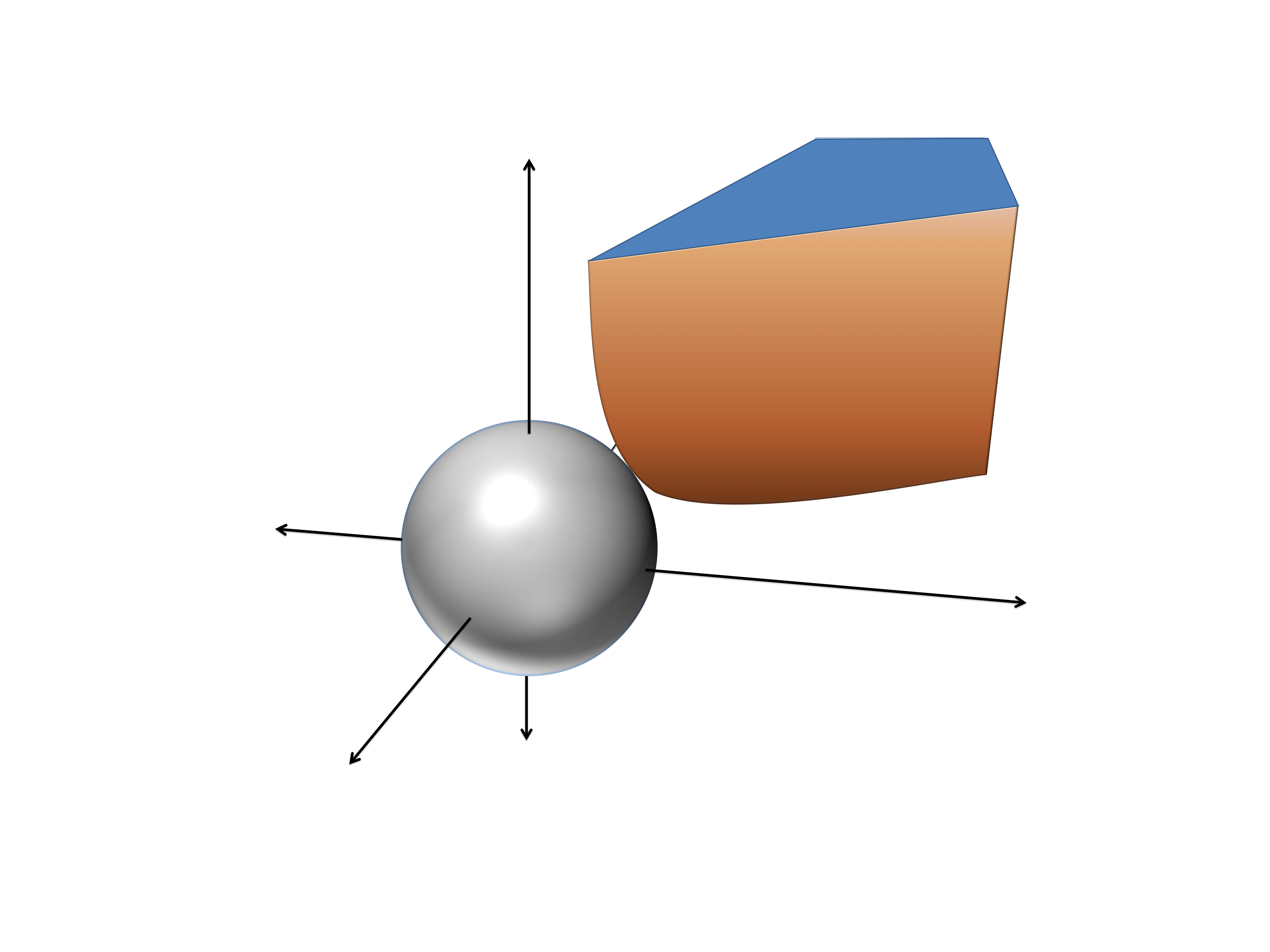}
		\linethickness{0.3pt}
		\put(20,15){$h_1$}
		\put(78,10){$h_2$}
		\put(36.75,48){$m_1$}
	\end{overpic}
	\caption{ \small An illustration of the geometry of BranchHull in the case where $\vh \in \R^2$ and $\vm \in \R^1$.  
		The feasible set of BranchHull has a shape similar to the solid in the top right.  The ridge of this set corresponds to the fundamental scaling ambiguity of the bilinear recovery problem.   The solution to BranchHull is given by the smallest scaling of the unit ball that intersects the feasible set.  The minimizer is exactly on this ridge because the ridge is `pointy.'  }
	\label{fig:pointy-ridge}
\end{figure}

A related formulation to BranchHull was recently introduced for the phase retrieval problem.  This formulation, called PhaseMax, is a linear program and was independently discovered by \cite{bahmani2016phase} and \cite{goldstein2016phasemax}.  PhaseMax enjoys a rigorous recovery guarantee under a random data model.  Existing recovery proofs are based on statistical learning theory [\cite{bahmani2016phase}], geometric probability [\cite{goldstein2016phasemax}], and elementary probabilistic concentration arguments [\cite{hand2016elementary}].  As with Wirtinger Flow, successful recovery of PhaseMax with optimal sample complexity has been proven when an appropriate initialization is known.  Unlike Wirtinger Flow, the initialization is used in PhaseMax's objective function, as opposed to its algorithmic implementation.  In both PhaseMax and Wirtinger Flow, an approximate solution or initialization is needed to state or solve the optimization problem. {\change We note that BranchHull does not require an anchor vector or initialization to be stated or solved. As a trade off, BranchHull instead assumes the sign information of the signal is known. }


The idea of convex relaxations in the natural parameter space for bilinear problems is not new.  For example, in nonlinear programming (NLP) or mixed integer nonlinear programming (MINLP) problems with bilinear constraints and specified variable bounds, a McCormick relaxation [\cite{mccormick1976computability}] replaces bilinear terms with four linear inequality constraints that define a convex quadrilateral that contains the hyperbola of feasible points within the variable bounds.  Tighter convex relaxations are possible [\cite{castro2015tightening}], such as by using the hyperbola itself as an inequality constraint [\cite{quesada1995global}].  These relaxations have been studied as part of branch and bound approaches to NLPs and MINLPs. Under certain conditions and branching rules [\cite{horst2013global}] these approaches can find a global minimizer; however, the branching results in many convex programs that need to be solved, and it may result in exponential time complexity.  In contrast, the present paper considers only the single convex program, BranchHull, achieved by the natural convex relaxation of bilinear constraints with only sign information.  This work establishes conditions --- in particular, subspace conditions --- under which exact recovery by an efficient convex program can be rigorously established.

This work motivates several interesting and important extensions. Most immediately, BranchHull can be extended when the phases of complex vectors are known.  Because of applications in signal processing and communications, it is also important to extend  the theory for BranchHull to include deterministic subspaces, such as the span of partial Fourier matrices. This paper shows that BranchHull is robust to noise that does not change  the sign of any measurement. Tolerance to a general noise model, including outliers, should be established for BranchHull or a variation with slack variables, such as in \cite{hand2016corruption}.  Noise tolerance in the case with sign change is particularly important because even one measurement with an incorrect sign can substantially alter the shape of the feasible set. For this general noise and outlier case, we propose the Robust BranchHull program
\begin{align}
	\minimize_{\vh \in \R^k, \vm \in \R^N,\ve \in \R^L} \|\vh\|_{2}^{2}+\|\vm\|_{2}^{2}+\lambda\|\ve\|_{1}  \text{ subject to }  &\sign(y_{\l})(\clt \vm + e_{\l})\blt \vh \geq |y_{\l}|, \tag{RBH} \label{rbh}\\
	 & s_{\l} \cdot \blt \vh \geq 0, ~ \l = 1,\dots, L \notag,
\end{align}	
which shifts the feasibility set to include the target signal while penalizing against shift. In the outlier case, the $\l_1$ penalty promotes sparsity of slack variable $\ve$, which is desired. We leave detailed empirical and theoretical analysis of \eqref{rbh} to future work. It would also be interesting to develop convex relaxations in the natural parameter space that do not use sign information.  Further, extensions to more general bilinear recovery problems are of significant interest.  All of these directions are left for future publications.


\subsection{Organization of the paper}
The remainder of the paper is organized as follows. In Section \ref{notation}, we present notations used throughout the paper. In Section \ref{sd proof}, we present the proof of Theorem \ref{thm:sd}. In Section \ref{noise proof}, we present the proof of Theorem \ref{thm:noise}. In Section \ref{simulation} we observe the performance of BranchHull on synthetic data.

\subsection{Notation}\label{notation}

Vectors and matrices are written with boldface, while scalars and entries of vectors are written in plain font.  For example, $\clone$ is the first entry of $\cl$.  We write $\boldsymbol{1}$ as the vector of all ones with dimensionality  appropriate for the context.  Let $[L] = \{1, 2, \ldots, L\}$.  Let $\ve_i$ be the $i$th standard basis element.  We write $K+N \lesssim L$ to mean that there exists a constant $C$ such that $K+N \leq C L$.  Given a vector in $\vx \in \R^N$, let $\tilde{\vx} \in \R^{N-1}$  be the subvector formed by all but the first coefficient of $\vx$.  Let $\mathbb{S}^{n-1}$ be the unit sphere in $\R^n$.  For matrices $\mA, \mB$, let $\langle \mA,\mB \rangle = \trace(\mB^\intercal \mA)$ be the Hilbert-Schmidt inner product of $\mA$ with $\mB$.  For a set $S$, let $\text{Conv}(S)$ be its convex hull.  Let $\Re{z}$ be the real part of a complex $z$. 

\section{Technical Proofs}\label{proofs}
 In this section we provide proofs of Theorems \ref{thm:sd} and \ref{thm:noise}. These proofs use a sphere covering type argument which is based on the idea that $m$ random directions sampled from a symmetric distribution will cover the unit sphere $\mathbb{S}^{n-1}$ with high probability when $m = \Omega(n)$. {\change Another} paper that uses this technique is the PhaseMax paper {\change by} \cite{goldstein2016phasemax}.

\subsection{Proof of Theorem \ref{thm:sd}}\label{sd proof}

{\change 
We will first show that BranchHull program \eqref{bh} is a convex program.
\begin{lemma}
If $\vy \in \mathbb{R}^L$ such that $y\neq 0$, $\vs\in \{\pm 1\}^L,\ \mB \in \mathbb{R}^{L\times K}$ and $\mC \in \mathbb{R}^{L\times N}$ then the BranchHull program \eqref{bh} is a convex program.	
\end{lemma}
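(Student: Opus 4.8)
The plan is to separate the two ingredients of a convex program and verify each in turn: a convex objective and a convex feasible set. The objective $\|\vh\|_2^2+\|\vm\|_2^2$ is a sum of two convex quadratics (each has Hessian equal to an identity block) and is therefore convex, so the whole burden falls on the feasible set. Since that set is the intersection over $\ell\in[L]$ of the per-index constraint sets, and an intersection of convex sets is convex, it suffices to show that for each fixed $\ell$ the set
\[
\mathcal{F}_\ell = \left\{(\vh,\vm)\in\R^K\times\R^N : \sign(\yl)\,\blt\vh\cdot\clt\vm \geq |\yl|,\ s_\ell\,\blt\vh \geq 0\right\}
\]
is convex.

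First I would reduce each $\mathcal{F}_\ell$ to a planar set via the linear map $T_\ell(\vh,\vm)=(\blt\vh,\,\clt\vm)\in\R^2$. Writing $(u,v)=T_\ell(\vh,\vm)$, one has $\mathcal{F}_\ell = T_\ell^{-1}(\mathcal{C}_\ell)$, where
\[
\mathcal{C}_\ell = \left\{(u,v)\in\R^2 : \sign(\yl)\,uv \geq |\yl|,\ s_\ell\,u \geq 0\right\}.
\]
Because the preimage of a convex set under a linear map is convex, it is enough to prove that the planar set $\mathcal{C}_\ell$ is convex. This is the step I expect to be the genuine (though elementary) obstacle, and I would handle it by a short case analysis on the two signs.

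For $\yl>0$ and $s_\ell=1$ the constraints read $u>0$ and $v\geq|\yl|/u$, which is exactly the epigraph of the function $u\mapsto|\yl|/u$ on $(0,\infty)$; since this function is convex (its second derivative is $2|\yl|/u^3>0$), the epigraph is convex. For $s_\ell=-1$ the admissible half is $u<0$, and dividing $uv\geq|\yl|$ by $u<0$ flips the inequality to $v\leq|\yl|/u$; here $u\mapsto|\yl|/u$ is concave on $(-\infty,0)$, so the hypograph $\{v\leq|\yl|/u\}$ is again convex. The case $\yl<0$ is symmetric, with $uv\leq-|\yl|$ replacing $uv\geq|\yl|$, and the identical epigraph/hypograph argument applies. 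This is precisely the statement, already noted before the definition of \eqref{bh}, that $\mathcal{C}_\ell$ is the convex hull of the single hyperbola branch $\{(u,v):uv=\yl,\ s_\ell u\geq0\}$.

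Finally I would assemble the pieces: each $\mathcal{C}_\ell$ is convex, hence each $\mathcal{F}_\ell=T_\ell^{-1}(\mathcal{C}_\ell)$ is convex, their intersection — the feasible set of \eqref{bh} — is convex, and minimizing the convex objective over this convex set is a convex program. The only point deserving care is the role of the hypothesis $\vy\neq 0$: for each index with $\yl\neq0$ one has $|\yl|>0$, so $\mathcal{C}_\ell$ is genuinely the convex-hull-of-a-hyperbola-branch described above, while any index with $\yl=0$ contributes only the half-plane $s_\ell u\geq0$; in every case convexity is preserved, completing the argument.
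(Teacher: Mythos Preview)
Your proposal is correct and follows essentially the same approach as the paper: reduce each per-index feasible set to a planar set via the linear map $(\vh,\vm)\mapsto(\blt\vh,\clt\vm)$, show that planar set is convex as the epigraph (or hypograph) of a $\yl/u$-type function according to the signs of $\yl$ and $s_\ell$, and then pull convexity back through the linear preimage and the intersection over $\ell$. Your treatment is slightly more explicit in enumerating the four sign cases and in handling indices with $\yl=0$, but the argument is the same.
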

\begin{proof}
	As the objective function is convex, we consider the constraints of \eqref{bh}. For a fixed $\ell$, let $S_{\ell}=\{(\vh,\vm) \in \mathbb{R}^K\times\mathbb{R}^N: \sign(\yl)\blt\vh\cdot\clt\vm\geq|\yl|,\ s_\ell \cdot\blt\vh\geq0 \}$, $S_{\l,1} = \{(x,w)\in \mathbb{R}^2: \sign(\yl)xw\geq |\yl|,\ s_\l w\geq0\}$ and $S_{\l,2} = \{(\vh,\vm)\in\mathbb{R}^K\times\mathbb{R}^N:(\blt\vh,\clt\vm)\in S_{\l,1}\}$. To show $S_\l$ is a convex set, it is sufficient to show that the sets $S_{\l,1}$ and $S_{\l,2}$ are convex. 
	
	We first show that the set $S_{\l,1}$ is convex. Let $P = \{w\in \mathbb{R} : s_\l w \geq 0\}$ and consider the function $f:P \rightarrow \mathbb{R}$ such that $f(w) = \frac{\yl}{w}$. Note that if $\sign(\yl)s_\l\geq 0$ then $f$ is a convex function and $S_{\l,1}$ is the epigraph of $f$. Similarly, if $\sign(\yl)s_\l\leq 0$ then $f$ is a concave function and $S_{\l,1}$ is the subgraph of $f$. In both cases, $S_{\l,1}$ is a convex set because the epigraph of a convex function and the subgraph of a concave function are convex.
	
	Lastly, $S_{\l,2}$ is convex because the inverse image of a convex set of a linear map is convex. So, $S_\l$ is also a convex set. Since the intersection of any number of convex sets is convex, we have that the constraint of \eqref{bh} is a convex set. Thus, BranchHull program \eqref{bh} is a convex program.
	\end{proof}
	}
To prove Theorem \ref{thm:sd}, we will show that  $(\ho, \mo)$ is the unique minimizer of an optimization with a larger feasible set defined by linear constraints.  

\begin{lemma}  \label{lem:sd-to-sdlc}
	If $(\ho, \mo)$ is the unique solution to
	\begin{align}
	\minimize_{\vh\in \R^K, \vm\in\R^N} \ \|\vh\|_2^2+\|\vm\|_2^2  \text{ subject to } \yl \langle \bl\clt, \vh \mot + \ho \mt \rangle \geq 2 \yl^2,& \label{bilc} \\[-.5em] \ \ell = 1, \ldots, L,& \notag
	\end{align}
	then $(\ho,\mo)$ is the unique solution to \eqref{bh}.
\end{lemma}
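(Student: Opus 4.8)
The plan is to exhibit \eqref{bilc} as a linear relaxation of \eqref{bh} that shares the same objective, and then transfer optimality downward from the larger feasible set to the smaller one. Write $F_{\mathrm{BH}}$ and $F_{\mathrm{lin}}$ for the feasible sets of \eqref{bh} and \eqref{bilc} respectively. Since both programs minimize the identical objective $\|\vh\|_2^2+\|\vm\|_2^2$, the lemma follows once I establish two facts: (i) $F_{\mathrm{BH}}\subseteq F_{\mathrm{lin}}$, and (ii) $(\ho,\mo)\in F_{\mathrm{BH}}$. Indeed, if $(\ho,\mo)$ is the unique minimizer over the larger set $F_{\mathrm{lin}}$, then every other $(\vh,\vm)\in F_{\mathrm{BH}}\subseteq F_{\mathrm{lin}}$ has strictly larger objective, so $(\ho,\mo)$ is also the unique minimizer over $F_{\mathrm{BH}}$.

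The substantive step is the inclusion (i), which I would prove one index at a time. Fix $\ell$, abbreviate $p=\blt\vh$ and $q=\clt\vm$, and recall $\wol=\blt\ho$, $\xol=\clt\mo$, and $\yl=\wol\xol$. Using $\langle\bl\clt,\vh\mot+\ho\mt\rangle=p\,\xol+\wol\,q$ together with the equivalence $\sign(\yl)\,pq\geq|\yl|\iff \yl\,pq\geq\yl^2$, the $\ell$th constraint of \eqref{bilc} reads $\yl(\xol\,p+\wol\,q)\geq 2\yl^2$, which is exactly the linearization of the BranchHull constraint $\yl\,pq\geq\yl^2$ about the point $(\wol,\xol)$. (For indices with $\yl=0$ both constraints are vacuous, so the inclusion is trivial there; assume $\yl\neq 0$ below.) To show a feasible point of \eqref{bh} satisfies it, I would check that the two summands $\yl\,\xol\,p$ and $\yl\,\wol\,q$ are each nonnegative and then apply AM–GM. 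The linear constraint $s_\ell\,p\geq0$ forces $\sign(p)=\sign(\wol)$, giving $\yl\,\xol\,p=\wol p\,(\xol)^2\geq0$; combining this with $\sign(pq)=\sign(\yl)=\sign(\wol\xol)$ (which follows from $\yl\,pq\geq\yl^2>0$) yields $\sign(q)=\sign(\xol)$ and hence $\yl\,\wol\,q=(\wol)^2\,\xol q\geq0$. Finally the product of the two summands is $\yl^2\,\wol\xol\,pq=\yl^3\,pq=\yl^2(\yl\,pq)\geq\yl^2\cdot\yl^2=\yl^4$, so AM–GM gives $\yl\,\xol\,p+\yl\,\wol\,q\geq 2\sqrt{\yl^4}=2\yl^2$, the desired inequality. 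Intersecting over $\ell$ gives $F_{\mathrm{BH}}\subseteq F_{\mathrm{lin}}$.

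Fact (ii) is a direct check: at $(\vh,\vm)=(\ho,\mo)$ we have $p=\wol$ and $q=\xol$, so $\yl\,pq=\yl^2\geq\yl^2$ and $s_\ell\,\wol=|\wol|\geq0$; thus $(\ho,\mo)$ is feasible for \eqref{bh}, lying on the boundary of each constraint. With (i) and (ii) in hand, the optimality-transfer argument of the first paragraph closes the proof.

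I expect the only genuine obstacle to be the inclusion step, and within it the sign bookkeeping required to certify that both summands are nonnegative before invoking AM–GM; everything else is formal. A slightly slicker alternative for (i) would be to invoke the preceding convexity lemma: each $S_\ell$ is convex and $(\wol,\xol)$ lies on its boundary, so the linear inequality of \eqref{bilc} is precisely a supporting hyperplane of $S_\ell$ at that point, whence $S_\ell$ lies in the corresponding half-space. I would likely present the elementary AM–GM computation, since it is self-contained and makes the tangency explicit.
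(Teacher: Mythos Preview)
Your proposal is correct and follows the same overall approach as the paper: show that the feasible set of \eqref{bh} is contained in that of \eqref{bilc}, note that $(\ho,\mo)$ lies in both, and transfer unique optimality. The only difference is cosmetic: the paper proves the per-index inclusion by invoking the supporting-hyperplane principle (the convex branch region lies in the tangent half-space at the boundary point $(\wol,\xol)$), whereas you verify the same inequality algebraically via the sign check plus AM--GM; you even flag the geometric alternative yourself, so there is no substantive divergence.
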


\begin{proof}[Proof of Lemma \ref{lem:sd-to-sdlc}]
	It suffices to show that the feasible set of \eqref{bilc} contains the feasible set of \eqref{bh}.  We may rewrite \eqref{bh} as 
	\begin{align*}
	\minimize_{\vh\in\R^K, \vm \in \R^N} \ \|\vh\|_2^2+\|\vm\|_2^2  \text{ subject to } &\yl \blt \vh \cdot \clt \vm  \geq \yl^2\\[-.5em]
	&s_\l \cdot \blt \vh \geq 0, ~ \l = 1, \ldots, L.
	\end{align*}
	We now use the fact that a convex set with a smooth boundary is contained in a halfspace defined by the tangent hyperplane at any point on the boundary of the set. Consider the point $(\wol, \xol) \in \R^2$, and observe that
	\begin{align}
	\left \{ (\wl, \xl) \in \R^2 \bigg | \begin{matrix}\yl \wl \xl \geq \yl^2 \\ \sign(\wl) = s_\l \end{matrix} \right \} \subseteq \left \{ (\wl, \xl) \in \R^2 \bigg | \begin{pmatrix}
	\yl \xol \\ \yl \wol \end{pmatrix} \cdot \begin{pmatrix}\wl - \wol \\ \xl - \xol \end{pmatrix} \geq 0 \right \}.
	\end{align}
	Plugging in $\wl = \blt \vh$ and $\xl = \clt \vm$, we have that any feasible $(\vh, \vm)$ satisfies $$\yl \clt \mo \blt \vh + \yl \blt \ho \clt \vm \geq 2 \yl^2, \quad \ell = 1, \ldots, L,$$
	which implies $\yl \langle \bl\clt, \vh \mot + \ho \mt \rangle \geq 2 \yl^2$ for all $\ell$.
\end{proof}

We now show that $(\ho, \mo)$ is the unique solution to the optimization problem \eqref{bilc} if the unit sphere in $\R^{N+K-2}$ is covered by $L$ hemispheres given in terms of $\bl$ and $\cl$.   Write $\bl = (\blone, \bltilde)$, where $\bltilde$ contains all but the first {\change element} of $\bl$.  Similarly, write $\cl = (\clone, \cltilde)$.

\begin{lemma} \label{lem:sdlc-sufficient-condition}
	Let $\ho = \ve_1$ and $\mo = \ve_1$.  The unique solution to \eqref{bilc} is $(\ho, \mo)$ if for all $(\dmtilde, \dhtilde) \in \R^{N-1} \times \R^{K-1}$ there exists an $\ell \in [L]$ such that $\blone \neq 0$, $\clone \neq 0$, and
	\begin{align} \Biggl\langle \frac{\cltilde}{\clone}, \dmtilde \Biggr \rangle + \Biggl \langle \frac{\bltilde}{\blone}, \dhtilde \Biggr \rangle \leq 0.
	\end{align}
\end{lemma}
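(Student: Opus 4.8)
The plan is to establish optimality of $(\ho,\mo)=(\ve_1,\ve_1)$ together with uniqueness in a single stroke, by showing that every feasible point of \eqref{bilc} other than $(\ho,\mo)$ has strictly larger objective. Since the objective $\|\vh\|_2^2+\|\vm\|_2^2$ is strictly convex and the constraints of \eqref{bilc} are affine, this is the natural route: the hypothesis is a statement about directions, so I want to translate feasibility into a condition on the perturbation $(\dh,\dm)$ and then read off the objective change.

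First I would substitute $\vh=\ve_1+\dh$ and $\vm=\ve_1+\dm$ into the $\ell$-th constraint, writing $\dh=(\delta h_1,\dhtilde)$, $\dm=(\delta m_1,\dmtilde)$, and using $\yl=\blone\clone$. Because the constraint is affine in $(\vh,\vm)$, this substitution is exact: the constant terms cancel the $2\yl^2$ on the right, leaving the linear inequality $\blone\clone\bigl[\clone\,\blt\dh+\blone\,\clt\dm\bigr]\ge 0$. Indices with $\blone=0$ or $\clone=0$ give $\yl=0$ and impose no constraint, so they can be discarded. For the remaining indices I would divide by $\blone^2\clone^2>0$ and peel off the first coordinates to reach the equivalent scalar condition
$$
(\delta h_1+\delta m_1)+\Bigl\langle \tfrac{\bltilde}{\blone},\dhtilde\Bigr\rangle+\Bigl\langle \tfrac{\cltilde}{\clone},\dmtilde\Bigr\rangle\ \ge\ 0 .
$$

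The decisive step is then to set $s:=\delta h_1+\delta m_1$ and apply the hypothesis to the fixed pair $(\dmtilde,\dhtilde)$: it furnishes some $\ell$ with $\blone,\clone\neq 0$ at which $\langle \cltilde/\clone,\dmtilde\rangle+\langle \bltilde/\blone,\dhtilde\rangle\le 0$. Feasibility of $(\vh,\vm)$ forces the displayed inequality at this very $\ell$, and the two together give $s\ge 0$. Finally I would expand the quadratic objective exactly as $\|\vh\|_2^2+\|\vm\|_2^2=\|\ho\|_2^2+\|\mo\|_2^2+2s+\|\dh\|_2^2+\|\dm\|_2^2$; since $s\ge 0$ and $(\dh,\dm)\neq 0$, the value strictly exceeds that at $(\ho,\mo)$. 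As $(\ho,\mo)$ is itself feasible (each constraint holds with equality), this shows it is the unique minimizer.

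I expect the only delicate point to be the bookkeeping in collapsing the bilinear-looking constraint to the single scalar inequality above: keeping track of the sign of $\blone\clone$ when dividing, correctly isolating the first coordinates, and checking that the hypothesis (one inequality per index $\ell$) is exactly strong enough to pin $s\ge 0$. There is no genuine analytic obstacle in this lemma once that algebra is in place — the conclusion is a one-line consequence of strict convexity — and the probabilistic heart of the matter (that such a covering condition actually holds with the claimed probability) is left to the subsequent hemisphere-covering lemmas.
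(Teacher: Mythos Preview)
Your proposal is correct and follows essentially the same route as the paper: reduce each affine constraint at $(\ve_1+\dh,\ve_1+\dm)$ to the scalar inequality $(\delta h_1+\delta m_1)+\langle\bltilde/\blone,\dhtilde\rangle+\langle\cltilde/\clone,\dmtilde\rangle\ge 0$, then invoke the hypothesis to force $\delta h_1+\delta m_1\ge 0$. The only cosmetic difference is that the paper separates uniqueness (closed convex sets have a unique nearest point to the origin) from optimality (the first-order condition $\langle\ho,\dh\rangle+\langle\mo,\dm\rangle\ge 0$), whereas you obtain both at once by expanding $\|\ve_1+\dh\|_2^2+\|\ve_1+\dm\|_2^2$ and using $\|\dh\|_2^2+\|\dm\|_2^2>0$ for $(\dh,\dm)\neq 0$; the algebra and the use of the hypothesis are identical.
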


\begin{proof}[Proof of Lemma \ref{lem:sdlc-sufficient-condition}]
	Because the feasible set of \eqref{bilc} is closed and convex, and because a closed convex set has a unique point closest to the origin, \eqref{bilc} has a unique minimizer.
	
	Consider a feasible point $(\ho + \dh, \mo + \dm)$. To prove that $(\ho,\mo)$ is a minimizer of \eqref{bilc}, it suffices to show
	\begin{align*}
	\langle \bl \clt, \ho\mot\rangle\langle \bl \clt, \dh\mot + \ho \dmt \rangle \geq 0 \ \forall \  \ell \Rightarrow \langle \mo, \dm \rangle + \langle \ho, \dh \rangle \geq 0.
	\end{align*}
	Plugging in $\ho = \ve_1$ and $\mo = \ve_1$, it suffices to show
	\begin{align*}
	\blone \clone \bigl[ \blone \clone(\dm_1 + \dh_1) + \blone \cltildet \dmtilde + \clone \bltildet \dhtilde \bigr]\geq 0 \ \forall \ \ell \Rightarrow \delta m_1 + \delta h_1 \geq 0
	\end{align*} 
	Dividing by $\blone^2 \clone^2$, it suffices to show
	\begin{align*}
	\delta m_1 + \delta h_1 + \biggl\langle \frac{\cltilde}{\clone}, \dmtilde \biggr\rangle + \biggl\langle \frac{\bltilde}{\blone}, \dhtilde \biggr\rangle \geq 0  \ \forall \ell \text{ s.t. } \blone \neq 0 \text{ and } \clone \neq 0 &\\\Rightarrow \delta m_1 + \delta h_1 \geq 0&.
	\end{align*}
	To prove this, it suffices to prove
	\begin{align*}
	\forall (\dhtilde, \dmtilde) \in \R^{N-1} \times \R^{K-1}, \exists \ \ell \text{ s.t. } &\blone \neq 0, \clone \neq 0,~ \text{and } ~ \biggl\langle \frac{\cltilde}{\clone}, \dmtilde \biggr\rangle + \biggl\langle \frac{\bltilde}{\blone}, \dhtilde \biggr\rangle \leq 0.
	\end{align*}
\end{proof}

For a given vector $\va$, we will call $\{\boldsymbol{\delta}\in\mathbb{S}^{n-1}:\langle\va,\boldsymbol{\delta}\rangle\geq 0\}$ the hemisphere centered at $\va$. We now provide a lower bound to the probability of covering the unit sphere by hemispheres centered at  $m$ random directions under a nonuniform probability distribution that is symmetric to negation.  This lemma is an immediate generalization of  Lemma 2 in  \cite{goldstein2016phasemax}, with a nearly identical proof.

\begin{lemma} \label{lem:sphere-covering}
	Choose $m$ independent random vectors $\{\va_i\}_{i=1}^m$ in $\mathbb{S}^{n-1}$ from a (possibly nonuniform) distribution that is symmetric with respect to negation, and is such that all subsets of size $n$ are linearly independent with probability $1$.  Then, the hemispheres centered at $\{\va_i\}_{i=1}^m$ cover the whole sphere with probability
	$$
	1 - \frac{1}{2^{m-1}} \sum_{k=0}^{n-1} {m-1 \choose k}.
	$$
	This value is the probability of flipping at least $n$ heads among $m-1$ tosses.
\end{lemma}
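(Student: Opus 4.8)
The plan is to recognize this as Wendel's theorem and to reduce the covering event to a statement about half-spaces. First I would observe that the hemispheres centered at $\{\va_i\}$ fail to cover $\mathbb{S}^{n-1}$ precisely when there is a point $\vx \in \mathbb{S}^{n-1}$ with $\langle \va_i, \vx \rangle < 0$ for every $i$; equivalently, all of the $\va_i$ lie strictly inside a common open half-space through the origin (take the half-space with inner normal $-\vx$). Thus the covering probability equals $1 - p_{m,n}$, where $p_{m,n}$ is the probability that $\va_1, \ldots, \va_m$ are contained in a common open half-space through the origin.

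To compute $p_{m,n}$ I would use a symmetrization trick. Draw i.i.d.\ signs $\epsilon_1, \ldots, \epsilon_m$, uniform on $\{\pm 1\}$ and independent of the $\va_i$. Because each $\va_i$ has a distribution symmetric under negation, the family $\{\epsilon_i \va_i\}$ has the same law as $\{\va_i\}$, so $p_{m,n} = \Pr\bigl[\{\epsilon_i \va_i\} \text{ lie in a common open half-space}\bigr]$. Now condition on the $\va_i$; by hypothesis every $n$-subset is linearly independent almost surely, so I may assume the $\va_i$ are in general position. The vectors $\epsilon_i \va_i$ lie in some open half-space $\{\vz : \langle \vc, \vz \rangle > 0\}$ iff $\epsilon_i = \sign\langle \vc, \va_i \rangle$ for every $i$; hence the number of sign vectors $(\epsilon_i)$ that place all $\epsilon_i \va_i$ in a common open half-space equals the number of distinct sign patterns $(\sign\langle \vc, \va_i \rangle)_{i=1}^m$ realized as $\vc$ ranges over $\R^n \setminus \{0\}$.

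This count is a classical fact about hyperplane arrangements: the $m$ hyperplanes $\{\vc : \langle \vc, \va_i \rangle = 0\}$ are in general position through the origin and partition $\R^n$ into $C(m,n) = 2\sum_{k=0}^{n-1}\binom{m-1}{k}$ cones. I would establish this by the standard recursion: adding the $m$-th hyperplane subdivides exactly those existing cones it meets, and the trace of the first $m-1$ hyperplanes on $H_m \cong \R^{n-1}$ is an arrangement of $m-1$ hyperplanes through the origin, giving $C(m,n) = C(m-1,n) + C(m-1,n-1)$; with the base cases this solves to the stated formula. Dividing by the $2^m$ equally likely sign vectors yields $p_{m,n} = 2^{-m} C(m,n) = 2^{-(m-1)}\sum_{k=0}^{n-1}\binom{m-1}{k}$, and this value is the same for every general-position configuration, so taking the expectation over the $\va_i$ leaves it unchanged. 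Subtracting from $1$ gives the claim, and the identification with the tail of a $\mathrm{Binomial}(m-1, \tfrac12)$ distribution (flipping at least $n$ heads in $m-1$ tosses) is immediate.

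The main obstacle is the combinatorial-geometry step, namely rigorously counting the cones of the arrangement (equivalently, the realizable sign patterns) and confirming that general position makes this count the constant $C(m,n)$ for every configuration, so that no boundary sign pattern is over- or under-counted. The symmetrization and the half-space reduction are short; the care lies in the recursion and its base cases. Since this lemma is an immediate generalization of Lemma~2 of \cite{goldstein2016phasemax}, the same argument applies essentially verbatim once the nonuniform but negation-symmetric distribution is accommodated by the symmetrization above.
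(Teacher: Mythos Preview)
Your proposal is correct and follows essentially the same approach as the paper: both reduce covering to the complementary event that all $\va_i$ lie in a common open half-space, invoke the Schl\"afli--Wendel count $2\sum_{k=0}^{n-1}\binom{m-1}{k}$ of regions in a central hyperplane arrangement in general position, and exploit negation symmetry to make the sign choices uniform over $\{\pm 1\}^m$ (you via auxiliary Rademacher signs, the paper via conditioning on $\va_i\in\{\pm\vz_i\}$). The only difference is cosmetic: the paper simply cites the region count, while you sketch the recursion $C(m,n)=C(m-1,n)+C(m-1,n-1)$.
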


\begin{proof}[Proof of Lemma \ref{lem:sphere-covering}]
	Classical arguments  in sphere covering [\cite{wendel1962problem}] show\footnote{This article credits \cite{schlafli1953gesammelte} for the proof argument.} the following: If $m$ hyperplanes containing the origin are such that the normal vectors to any subset of $n$ hyperplanes are linearly independent, then the complement of the union of these hyperplanes  is partitioned into  
	$$
	r(n,m) = 2 \sum_{k=0}^{n-1} {m-1 \choose k}
	$$
	connected regions. In each of these regions,  every point lies on the same side of each hyperplane.  Alternatively put, each region  corresponds to a unique assignment of a side of each hyperplane.  For a fixed set of $m$ hyperplanes, if the half space on either side  of each hyperplane is selected by independent tosses of a fair coin,  then with {\change probability given in the lemma statement}, there will be no nontrivial intersection {\change of all} these half spaces. 
	
	By the assumption that the distribution of $\va_i$ is symmetric with respect to negation, we have that for any $\vz \in \mathbb{S}^{n-1}$, the conditional distribution of $\va_i$ given $\va_i \in \{ \pm \vz\}$ is uniform over the two elements $\pm \vz$.  By independence, for any fixed  $\{\vz_i\}_{i=1}^m \in (\mathbb{S}^{n-1})^m$, the distribution of $\{\va_i\}$ conditioned on the event $\{\va_i = \pm \vz_i\}$ is uniform over the $2^{m}$ possibilities.  Thus, conditioned on this event, {\change the probability that the sphere is covered is that of the lemma statement}.  Integrating over all possible $\{\vz_i\}$, the lemma follows.
\end{proof}

Our last technical lemma provides an explicit lower bound to the probability of the sphere covering given in Lemma \ref{lem:sdlc-sufficient-condition}.

\begin{lemma} \label{lem:probability-estimate}
	With probability at least $1 - \exp\Biggl( - \frac{\bigl[(L-1) - 2{\change(N+K-2)}\bigr]^2}{2(L-1)} \Biggr)$, we have that
	\begin{align}
	\forall (\dhtilde, \dmtilde) \in \R^{N-1} \times \R^{K-1}, \exists \ \ell \text{ such that }  \biggl\langle \frac{\cltilde}{\clone}, \dmtilde \biggr\rangle + \biggl\langle \frac{\bltilde}{\blone}, \dhtilde \biggr\rangle \leq 0. \label{opt-cond}
	\end{align}
\end{lemma}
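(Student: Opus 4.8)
The plan is to recast \eqref{opt-cond} as a sphere-covering event and then apply Lemma \ref{lem:sphere-covering}. For each $\l \in [L]$ I would set
\[
\va_\l = \left( \frac{\cltilde}{\clone}, \frac{\bltilde}{\blone} \right) \in \R^{N-1} \times \R^{K-1} = \R^{N+K-2},
\]
which is well defined for every $\l$ on a probability-one event, since $\blone$ and $\clone$ are $\mathcal{N}(0,1)$ and hence nonzero almost surely. Writing $\boldsymbol{\delta} = (\dmtilde, \dhtilde)$, condition \eqref{opt-cond} reads: for every $\boldsymbol{\delta} \in \R^{N+K-2}$ there is an $\l$ with $\langle \va_\l, \boldsymbol{\delta} \rangle \leq 0$. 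This is homogeneous in $\boldsymbol{\delta}$ and trivially true at $\boldsymbol{\delta} = 0$, so it suffices to verify it on the sphere $\mathbb{S}^{N+K-3}$, where it asserts exactly that the hemispheres centered at the normalized vectors $-\va_\l / \|\va_\l\|$ cover $\mathbb{S}^{N+K-3}$.

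Next I would check the two hypotheses of Lemma \ref{lem:sphere-covering} with $m = L$ and $n = N+K-2$. For invariance under negation: since $\bltilde$ is a centered Gaussian vector independent of $\blone$, the vector $-\bltilde/\blone$ has the same law as $\bltilde/\blone$, and likewise for the $\vc$-block; as the two blocks are independent, the law of $\va_\l$, and hence of $-\va_\l/\|\va_\l\|$, is symmetric with respect to negation. For general position: the $\va_\l$ are i.i.d. with a distribution that is absolutely continuous on $\R^{N+K-2}$, so any fixed set of $n$ of them is linearly independent almost surely — conditioning on all but one column, the relevant determinant is, as a function of the remaining column, a linear functional whose coefficient vector is almost surely nonzero, hence it vanishes with probability zero — and a union bound over the finitely many subsets of size $n$ gives the claim.

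With the hypotheses in place, Lemma \ref{lem:sphere-covering} shows the covering probability equals
\[
1 - \frac{1}{2^{L-1}} \sum_{k=0}^{N+K-3} \binom{L-1}{k} = \PP(X \geq N+K-2),
\]
where $X \sim \mathrm{Binomial}(L-1, \tfrac12)$, i.e. the probability of at least $N+K-2$ heads in $L-1$ fair tosses. The final step is to bound this tail from below. Setting $M = L-1$ and $n = N+K-2$, and working in the regime $n < M/2$ (equivalently $L > 2N+2K-3$, as in Theorem \ref{thm:sd}), Hoeffding's inequality applied to the sum of $M$ independent $\{0,1\}$ variables gives $\PP(X \leq n) \leq \exp\!\left( -\frac{(M-2n)^2}{2M} \right)$, and since $\PP(X < n) \leq \PP(X \leq n)$ the same bound controls $\PP(X < n)$. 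Substituting $M = L-1$ and $n = N+K-2$ yields the advertised lower bound $1 - \exp\!\left( -\frac{[(L-1) - 2(N+K-2)]^2}{2(L-1)} \right)$ on the probability of \eqref{opt-cond}.

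I expect the only delicate point to be the bookkeeping in the reduction: correctly identifying the ambient dimension as $N+K-2$ (so that the sphere is $\mathbb{S}^{N+K-3}$ and $n = N+K-2$ in Lemma \ref{lem:sphere-covering}), and confirming that the binomial quantity produced by that lemma is the upper tail $\PP(X \geq n)$ rather than the lower one. Once the correspondence is pinned down, the symmetry check rests only on the Gaussianity of the numerators $\bltilde$ and $\cltilde$, and the Hoeffding estimate is routine.
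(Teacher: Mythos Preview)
Your proposal is correct and follows essentially the same route as the paper: recast \eqref{opt-cond} as the event that hemispheres centered at the (normalized) vectors $\bigl(\cltilde/\clone,\bltilde/\blone\bigr)$ cover $\mathbb{S}^{N+K-3}$, invoke Lemma~\ref{lem:sphere-covering} after checking negation-symmetry and almost-sure general position, and then lower-bound the resulting binomial tail via Hoeffding with $m=L-1$ and $n=N+K-2$. Your write-up is slightly more explicit than the paper's in verifying absolute continuity and the linear-independence hypothesis, but the argument is otherwise identical.
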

\begin{proof}[Proof of Lemma \ref{lem:probability-estimate}]
	To show \eqref{opt-cond}, we must show that the $L$ hemispheres (of the unit sphere in $\R^{N-1} \times \R^{K-1}$) centered at $(-\frac{\cltilde}{\clone}, -\frac{\bltilde}{\blone})$ cover the entire sphere.  As the distribution of $(\frac{\cltilde}{\clone}, \frac{\bltilde}{\blone})$ is invariant to negation, and as any $n$ samples from this distribution are linearly independent with probability 1, Lemma \ref{lem:sphere-covering} gives that the probability that \eqref{opt-cond} holds is at least the probability of flipping at least {\change $N+K-2$} heads among $L-1$ tosses of a fair coin.
	
	We now bound the probability of getting at least $n$ heads among $m$ fair coin tosses.  Let $X$ be the number of heads in $m$ tosses.  By Hoeffding's inequality for Bernoulli random variables  [\cite{wasserman2013all}], for any $t \geq 0$, $$\PP\Bigl(X-\frac{m}{2}>-mt \Bigr)\geq 1-e^{-2mt^2}.$$ By selecting $t = \frac{1}{2} - \frac{n}{m}$, we get that when $n \leq m/2$,
	$$
	\PP(\text{at least $n$ heads among $m$ tosses}) \geq 1 - e^{-2m(\frac{1}{2} - \frac{n}{m})^2 } = 1 - e^{-\frac{(m-2n)^2}{2m}}.
	$$
	
	The lemma follows by plugging in $m = L-1$ and {\change $n = N + K - 2$} into the above probability estimate.
\end{proof}
Now, we may prove the theorem.
\begin{proof}[{\bf Proof of Theorem \ref{thm:sd}}]
	
	Without loss of generality, let $\|\hnat\|_2 = \|\mnat\|_2$.  This is possible because for any $\bl$ and $\cl$,  we have that  $\Biggl(\ho \sqrt{\frac{\|\mo\|_2}{\|\ho\|_2}}, \mo \sqrt{\frac{\|\ho\|_2}{\|\mo\|_2}} \Biggr)$ and $(\hnat, \mnat)$ give equal values of $\yl = \langle \bl \clt, \ho\mot\rangle$.
	
	Further, without loss of generality, let $\| \hnat\|_2 = \| \mnat\|_2 = 1$.  This is possible because the scaling
	$$
	\hat{\vh} = \frac{\vh}{\| \hnat\|_2}, \quad \hat{\vm} = \frac{\vm}{\| \mnat\|_2}, \quad \hat{\hnat} = \frac{\hnat}{\| \hnat\|_2}, \quad \hat{\mnat} = \frac{\mnat}{\| \mnat\|_2},
	$$
	turns \eqref{bh} into
	\begin{align}
	\minimize_{\vh \in \R^K, \ \vm \in \R^N} \ &\|\hnat\|_2^2 \|\hat{\vh}\|_2^2+ \|\mnat\|_2^2 \|\hat{\vm}\|_2^2 \label{norm1} \\ \text{ subject to } &\langle \bl \clt, \hhatnat \hat{\mot} \rangle \langle \bl \clt, \hat{\vh} \hat{\mt} \rangle \geq \langle \bl \clt, \hhatnat \hat{\mot} \rangle^2
	\notag\\
	&s_\l  \cdot \blt \hat{\vh} \geq 0, ~ \l = 1, \ldots, L. \notag
	\end{align}
	
	Further, without loss of generality we may take $\hnat = \ve_1$ and $\mnat = \ve_1$. To see this is possible, let $\Rhnat$ and $\Rmnat$ be rotation matrices that map $\hnat$ and $\mnat$ to $\ve_1$, respectively.   Letting $\bar{\vh} = \Rhnat \vh$, $\bar{\vm} = \Rmnat \vm$, and $\bar{s}_\l = \sign(\blt \Rhnatt \ve_1)$, problem \eqref{bh} can be written
	\begin{align}
	\minimize_{\bar{\vh} \in \R^K, \ \bar{\vm} \in \R^N} \ & \|\Rhnat \bar{\vh}\|_2^2+ \|\Rmnat \bar{\vm}\|_2^2 \\ \text{ subject to } &\langle \Rhnat \bl \clt \Rmnatt, \eoneeonet \rangle \langle \Rhnat \bl \clt \Rmnatt, \bar{\vh} \bar{\mt} \rangle \geq \langle \Rhnat \bl \clt \Rmnatt, \eoneeonet \rangle^2
	\notag\\
	&\bar{s}_\l  \cdot \blt \Rhnatt \bar{\vh} \geq 0, ~ \l = 1, \ldots, L. \notag
	\end{align}
	As $\ell_2$ norms are invariant to rotation and as $\Rhnat \bl$ and $\Rmnat \cl$ have independent $\mathcal{N}(0, 1)$ entries, we may take $(\hnat, \mnat) = (\ve_1, \ve_1)$. 
	
	Let $E$ be the event that \eqref{opt-cond} holds.  By Lemma~\ref{lem:probability-estimate}, $$\PP(E) \geq 1 - \exp\Biggl( - \frac{\bigl[(L-1) - 2{\change (N+K-2)}\bigr]^2}{2(L-1)} \Biggr).$$  By Lemma~\ref{lem:sdlc-sufficient-condition},  on $E$, $(\hnat, \mnat)$ is the unique solution to \eqref{bilc}.  By Lemma~\ref{lem:sd-to-sdlc}, on $E$, $(\hnat, \mnat)$ is the unique solution to \eqref{bh}.
\end{proof}

\subsection{Proof of Theorem \ref{thm:noise}}\label{noise proof}
We will now prove that BranchHull is robust to small dense noise which does not alter the sign of the measurements. The sign of the {\change measurements remain} unchanged when $\xi_\l \geq -1$ for all $\l \in [L]$. We first only consider measurements $\yl$ with noise $\xi_\l$ that satisfy 
\begin{equation}\label{one-sided}
	\xi_\ell \in[-1,0] 	
\end{equation}
for all $\ell \in [L]$. If all the measurements satisfy condition \eqref{one-sided}, we say the noise is "one-sided". Note that the noise is one-sided if the convex hull of the branch of the hyperbola corresponding to the noisy measurement contain the hyperbola corresponding to the noiseless measurement, for all measurements. We first establish a recovery result for measurements with one-sided noise and show that for measurements that contain $\xi_\l >0$, the problem can be transformed to a related scaled problem whose corresponding measurements contain one-sided noise.

For the remainder of the paper, let $\hat{y_{\l}} = \blt \hnat \clt \mnat$. Lemma \ref{good noise thm} {\change shows} that if the measurements contain one-sided noise, the recovery error using BranchHull program \eqref{bh} is bounded by $\|\vxi\|_\infty$.
\begin{lemma}\label{good noise thm}
	Let $\ho = e_1$ and $\mo = e_1$. Let $\mB\in \mathbb{R}^{L\times K}$, $\mC\in \mathbb{R}^{L\times N}$ and $y_{\ell}$ satisfy \eqref{signal-recovery-problem} such that the noise is one-sided as in \eqref{one-sided}. Let $\epsilon = \|\vxi\|_\infty$. The minimizer $(\vh^*,\vm^*)$  of the BranchHull program \eqref{bh} is unique and satisfies
	\[\left\|\vh^* - \ho\right\|_2^2+\left\|\vm^* - \mo\right\|_2^2\leq 4(1-\sqrt{1- \epsilon})\]
	if for all $(\dhtilde,\dmtilde)\in \mathbb{R}^{K-1}\times\mathbb{R}^{N-1}$, there exists $\ell$, $k \in [L]$ such that
	\begin{align}
		&\sign(\blone)\bltildet\dhtilde\leq0 \text{ and } \sign(\clone)\cltildet\dmtilde\leq 0,   \label{cond1}\\ 
		&\sign(b_{k1})\bktildet\dhtilde\geq 0 \text{ and } \sign(c_{k1})\cktildet\dmtilde\leq 0. \label{cond2}
	\end{align}	
\end{lemma}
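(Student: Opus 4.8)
The plan is to exploit feasibility of the ground truth together with the covering hypothesis, collapsing the statement into a one–dimensional estimate on the scaling coordinates $\delta h_1$ and $\delta m_1$. First I would record that $(\ho,\mo)=(\ve_1,\ve_1)$ is itself feasible for \eqref{bh}: since the noise is one-sided, $\sign(\yl)=\sign(\blone\clone)$ and $|\yl|=(1+\xi_\l)|\blone\clone|\le|\blone\clone|$, so the product constraint reads $\sign(\yl)\,\blt\ho\cdot\clt\mo=|\blone\clone|\ge|\yl|$, while the linear constraint $s_\l\blt\ho=|\blone|\ge0$ holds. Hence the optimal value is at most $\|\ho\|_2^2+\|\mo\|_2^2=2$. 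Writing $\dh=\vh^*-\ho$, $\dm=\vm^*-\mo$ and expanding $\|\ve_1+\dh\|_2^2+\|\ve_1+\dm\|_2^2\le2$ yields the budget inequality
\[
\|\dh\|_2^2+\|\dm\|_2^2\le-2(\delta h_1+\delta m_1),
\]
so it suffices to prove the lower bound $\delta h_1+\delta m_1\ge-2(1-\sqrt{1-\epsilon})$.

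Next I would normalize each active constraint. For every $\l$ with $\blone,\clone\neq0$, dividing the product constraint by $|\blone\clone|$ and using $\blt\vh^*=\blone\bigl(1+\delta h_1+\langle\bltilde/\blone,\dhtilde\rangle\bigr)$ together with the analogous identity for $\clt\vm^*$ rewrites feasibility as
\[
\Bigl(1+\delta h_1+\langle\tfrac{\bltilde}{\blone},\dhtilde\rangle\Bigr)\Bigl(1+\delta m_1+\langle\tfrac{\cltilde}{\clone},\dmtilde\rangle\Bigr)\ge1+\xi_\l\ge1-\epsilon.
\]
The key structural point is that each factor is nonnegative: the sign constraint $s_\l\blt\vh^*\ge0$ forces $1+\delta h_1+\langle\bltilde/\blone,\dhtilde\rangle\ge0$, and positivity of the product then forces the second factor to be nonnegative as well. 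This positivity is exactly what makes the subsequent monotonicity step legitimate, and verifying it carefully — including discarding the degenerate indices where $\yl=0$ or $\blone\clone=0$ — is the most delicate bookkeeping in the argument.

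With this in hand I would apply hypothesis \eqref{cond1} to the particular direction $(\dhtilde,\dmtilde)$ produced by the minimizer: it supplies an index $\l$ at which $\langle\bltilde/\blone,\dhtilde\rangle\le0$ and $\langle\cltilde/\clone,\dmtilde\rangle\le0$. At this index both tilde-terms only decrease the (nonnegative) factors, so by monotonicity
\[
(1+\delta h_1)(1+\delta m_1)\ge\Bigl(1+\delta h_1+\langle\tfrac{\bltilde}{\blone},\dhtilde\rangle\Bigr)\Bigl(1+\delta m_1+\langle\tfrac{\cltilde}{\clone},\dmtilde\rangle\Bigr)\ge1-\epsilon,
\]
with $1+\delta h_1\ge0$ and $1+\delta m_1\ge0$. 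Applying AM--GM, $(1+\delta h_1)+(1+\delta m_1)\ge2\sqrt{(1+\delta h_1)(1+\delta m_1)}\ge2\sqrt{1-\epsilon}$, gives $\delta h_1+\delta m_1\ge-2(1-\sqrt{1-\epsilon})$, which combined with the budget inequality yields $\|\dh\|_2^2+\|\dm\|_2^2\le4(1-\sqrt{1-\epsilon})$. Uniqueness comes for free: the feasible set of \eqref{bh} is closed and convex (by the convexity lemma) and nonempty, while the objective is strictly convex and coercive, so the minimizer exists and is unique.

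The main obstacle is the normalization/positivity step of the second paragraph — converting the bilinear constraints into the clean product form with provably nonnegative factors, which is precisely where the sign information $s_\l$ is indispensable. I also expect \eqref{cond2} to be silent in the error estimate itself, since the displayed bound is driven entirely by the $(\le,\le)$ pattern of \eqref{cond1}; my reading is that the complementary $(\ge,\le)$ pattern of \eqref{cond2} exists to make the ``for all directions'' covering hold with high probability in the companion probability lemma and to survive the reduction from two-sided to one-sided noise, rather than to produce the inequality here. I would verify this against the covering estimate before finalizing, because if the probabilistic argument requires a symmetric two-pattern cover then \eqref{cond2} is essential upstream even though it does not appear in the deterministic bound.
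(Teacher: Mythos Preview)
Your argument is correct and in fact cleaner than the paper's. Both proofs open with the same budget inequality $\|\dh\|_2^2+\|\dm\|_2^2\le-2(\delta h_1+\delta m_1)$ and aim for $\delta h_1+\delta m_1\ge-2(1-\sqrt{1-\epsilon})$, but they reach the latter differently. The paper expands the product constraint algebraically into $|\hat y_\ell|(\delta h_1+\delta m_1)+I\ge|\hat y_\ell|\xi_\ell$ and then bounds the cross term $I$ by $|\hat y_\ell|\,\delta h_1\delta m_1$ via a case split on the sign of $1+\delta m_1$: condition \eqref{cond1} handles $1+\delta m_1\ge0$ while \eqref{cond2} handles $1+\delta m_1<0$, after which a set-inclusion argument converts $\delta h_1+\delta m_1+\delta h_1\delta m_1\ge-\epsilon$ together with $\delta h_1\ge-1$ into the desired bound. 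Your factorized form $A\cdot B\ge1+\xi_\ell$ sidesteps the split entirely: the sign constraint gives $A\ge0$, strict positivity of the product then forces $B>0$, and since $b\le0$ at the \eqref{cond1} index you get $1+\delta m_1=B-b>0$ automatically, so the negative case never arises and AM--GM applies directly. One correction to your closing speculation: in the paper \eqref{cond2} is \emph{not} used upstream in the covering lemma or in the two-sided to one-sided reduction --- it is invoked precisely here, in the deterministic proof of this lemma, to cover the case $1+\delta m_1<0$. Your argument shows that case is vacuous and hence that \eqref{cond2} could be dropped from the hypothesis altogether. What the paper's route buys is an explicit algebraic audit trail; what yours buys is a shorter proof, a transparent AM--GM finish, and a strictly weaker hypothesis.
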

\begin{proof}
	First note that the minimizer of BranchHull program \eqref{bh} is unique because the feasible set is closed and convex and a closed convex set has a unique point closest to the origin. We now prove the remainder of lemma \ref{good noise thm} by showing that any feasible perturbation from the candidate minimizer increases the objective value of the BranchHull program \eqref{bh}. 
	
	Assume the minimizer of \eqref{bh} is $(\ho+\dh, \mo + \dm)$. Note that $(\ho, \mo)$ is feasible in \eqref{bh} because the noise is one-sided. Comparing the objective values at  $(\ho+\dh, \mo + \dm)$ and $(\ho, \mo)$, we get
	\begin{equation}\label{compare obj}
		\|\dh\|^2_2 + \|\dm\|^2_2 \leq -2\left(\hnatt \dh + \mnatt \dm\right) = -2\left(\delta h_1 + \delta m_1\right).
	\end{equation}
	
	We now use the second feasibility condition $s_\ell \blt \vh\geq 0$ to show $\delta h_1 \geq -1$. Since $(\ho+\dh, \mo + \dm)$ is feasible, the following holds for all $\ell \in [L]$.
	\begin{align}
		& s_\ell \blt(\ho + \dh)\geq 0 \nonumber \\
		\Rightarrow & \sign(\blone)\left(\blone+\blone\delta h_1 + \bltildet\dhtilde\right)\geq 0 \label{sign constraint}\\
		\Rightarrow & |\blone|\delta h_1 \geq -|\blone| -\sign(\blone) \bltildet\dhtilde \nonumber\\
		\Rightarrow & \delta h_1 \geq -1 - \frac{\sign(\blone) \bltildet\dhtilde}{|\blone| \nonumber}\\
		\Rightarrow & \delta h_1 \geq -1, \label{dm bound}
	\end{align}
	where the first implication holds because $s_{\ell} = \sign(\blt\ho)$ and $\ho = e_1$ and the last implication holds because, by assumption \eqref{cond1}, there exists a $\l \in [L]$ such that $\sign(\blone)\bltildet\dhtilde\leq0$.
	
	We now use the first feasibility condition on $(\ho+\dh, \mo + \dm)$ to show that $\delta h_1 + \delta m_1$ is bounded from below. From the first feasibility condition, for all $\ell \in [L]$ we have
	\begin{align}
			&\sign(y_\ell)\blt(\ho + \dh)\clt(\mo + \dm)\geq |y_\ell| \nonumber\\
			\Rightarrow &\sign(y_\ell)(\blone + \blt\dh)(\clone + \clt\dm)\geq |y_\ell|\nonumber\\
			\Rightarrow & \sign(y_\ell)\left(\blone\clone + \blone \clt \dm + \blt\dh \clone + \blt\dh \clt\dm\right) \geq |y_\ell|\nonumber\\
			\Rightarrow & \sign(y_{\ell})\blone\clone(\delta h_1 + \delta m_1)+ \sign(y_\ell)\left(\blone\cltildet\dmtilde + \bltildet\dhtilde\clone + \blt\dh \clt\dm\right) \geq  |y_\ell| - \sign(y_\ell)\blone\clone\nonumber\\
			\Rightarrow &|\hat{y_\ell}|(\delta h_1 + \delta m_1) + \underbrace{\sign(y_\ell)\left(\blone\cltildet\dmtilde + \bltildet\dhtilde\clone + \blt\dh \clt\dm\right)}_{I} \geq |\hat{y_\ell}|\xi_\ell, \label{first feas 1}
		\end{align}
	where the first implication holds because $\ho = \ve_1$ and $\mo = \ve_1$ and the last implication holds because $\sign(y_\ell) = \sign(\hat{y_\ell})$ and $|y_\ell| = \sign(y_\ell)\blone\clone(1+\xi_\ell)$.
	
	We now show that term $I$ is less than $|\hat{y_\ell}|\delta h_1 \delta  m_1$ for some $\ell \in [L]$. Consider
		\begin{align}
			I & = \sign(y_\ell)\left(\blone\cltildet\dmtilde + \bltildet\dhtilde\clone + \blt\dh \clt\dm\right) \nonumber \\
			& = \sign(y_\ell)\left(\blone\cltildet\dmtilde + \bltildet\dhtilde\clone + \blone\clone\delta h_1 \delta m_1 + \blone\delta h_1 \cltildet\dmtilde+\bltildet\dhtilde\clone\delta m_1 + \bltildet\dhtilde\cltildet\dmtilde\right)\nonumber \\
			& = |\hat{y_\ell}|\delta h_1 \delta m_1 +\sign(\hat{y_\ell})(1+\delta m_1)\bltildet\dhtilde\clone + \sign(\hat{y_\ell})\cltildet\dmtilde\left(\blone +\blone\delta h_1 + \bltildet\dhtilde\right)\nonumber \\
			& = |\hat{y_\ell}|\delta h_1 \delta m_1 +\underbrace{(1+\delta m_1)|\clone|\sign(\blone)\bltildet\dhtilde + \sign(\clone)\cltildet\dmtilde\left|\blone +\blt{\dh}\right|}_{II},\nonumber 
		\end{align}
	where the third equality holds because $\sign(y_\ell)=\sign(\hat{y_\ell}) = \sign(\blone\clone)$ and the fourth equality holds thanks to \eqref{sign constraint}. Note that because of assumptions \eqref{cond1} and \eqref{cond2}, there exists a $\ell \in [L]$ such that $II \leq 0$. This is because if $(1+\delta m_1) \geq 0$, then we have $II \leq  0$ for $\ell$ that satisfy \eqref{cond1}. Similarly, if $(1+\delta m_1) < 0$, then we have $II \leq 0$ for $k$ that satisfy \eqref{cond2}. Thus, there exists an $\ell \in [L]$ such that 
	\begin{equation} \label{first feas 2}
		I = \sign(y_\ell)\left(\clone\cltildet\dmtilde + \bltildet\dhtilde\clone + \blt\dh \clt\dm\right) \leq |\hat{y_\ell}|\delta h_1 \delta m_1.
	\end{equation} 
	Combining \eqref{first feas 1} and \eqref{first feas 2}, we get there exist a $\ell$ such that 
	\begin{equation}
		\delta h_1 + \delta m_1 + \delta h_1 \delta m_1 \geq \frac{|\hat{y_\ell}|\xi_\ell}{|\hat{y_\ell}|}\geq \xi_\l \geq -\epsilon.
	\end{equation}
	The last inequality holds because $\epsilon = \|\vxi\|_\infty$. Lastly, $\delta h_1 + \delta m_1 \geq -2\left(1-\sqrt{1-\epsilon}\right)$ because for all $\epsilon \in [0,1]$, 
	\begin{equation}
		\begin{aligned}
			&\left\{(\delta h_1,\delta m_1)\in \mathbb{R}^2\big| \delta h_1 + \delta m_1 \geq -2\left(1-\sqrt{1-\epsilon}\right)\right\}\\
			&\supset \left\{(\delta h_1,\delta m_1)\in \mathbb{R}^2\big| \delta h_1 + \delta m_1  + \delta h_1 \delta m_1 \geq -\epsilon, \ \delta h_1 \geq -1 \right\}.
		\end{aligned}	
	\end{equation}
	Thus, combining \eqref{compare obj} with $\delta h_1 + \delta m_1 \geq -2\left(1-\sqrt{1-\epsilon}\right)$, we get the desired result $\left\|\dh\right\|_2^2+\left\|\dm\right\|_2^2\leq 4(1-\sqrt{1-\epsilon})$.	
\end{proof}

The next lemma shows that that if $\mB \in \mathbb{R}^{L \times K}$ and $\mC\in \mathbb{R}^{L \times N}$ contain i.i.d. $\mathcal{N}(0,1)$ entries, then \eqref{cond1} and \eqref{cond2} holds with high probability if $L = \Omega(K+N)$. 

\begin{lemma}\label{sphere covering}
Let $\mB \in \mathbb{R}^{L \times K}$ and $\mC\in \mathbb{R}^{L \times N}$ contain i.i.d. $\mathcal{N}(0,1)$ entries. If $L\geq C(K+N)$ then 
\[\min_{\vx \in \mathbb{S}^{K-1}, \vy \in \mathbb{S}^{N-1}} \sum_{\ell = 1}^{L} \mathbbm{1}_{\blt \vx\leq 0}\mathbbm{1}_{\clt \vy\leq 0} \geq 0.2L\]
with probability at least $1 - e^{-cL}$. Here, $C$ and $c$ are absolute constants.
\end{lemma}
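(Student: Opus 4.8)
The plan is to prove a uniform law of large numbers for the normalized count
$\tfrac1L g(\vx,\vy)$, where $g(\vx,\vy)=\sum_{\ell=1}^{L}\mathbbm{1}_{\blt\vx\le0}\mathbbm{1}_{\clt\vy\le0}$: I would show that $\tfrac1L g(\vx,\vy)$ stays within $0.05$ of its pointwise mean $\tfrac14$ simultaneously over all $(\vx,\vy)\in\mathbb{S}^{K-1}\times\mathbb{S}^{N-1}$, which immediately yields $\min_{\vx,\vy}g(\vx,\vy)\ge0.2L$. The pointwise picture is easy: since $\mB$ and $\mC$ are independent with $\blt\vx\sim\mathcal N(0,1)$ and $\clt\vy\sim\mathcal N(0,1)$ for unit $\vx,\vy$, the summand is Bernoulli$(1/4)$ and $g(\vx,\vy)\sim\mathrm{Binomial}(L,1/4)$ with mean $L/4$. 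A Chernoff/Hoeffding bound then gives $\PP\bigl(g(\vx,\vy)<0.2L\bigr)\le e^{-\kappa L}$ for an absolute constant $\kappa>0$, because $0.2<1/4$. Everything reduces to upgrading this single-point tail to a statement uniform over the two spheres.

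The natural first attempt is an $\epsilon$-net plus a margin argument. Writing $\hat\bl=\bl/\|\bl\|$, the inclusion $\{\hat\bl^\T\vx_0\le-\delta\}\subseteq\{\blt\vx\le0\}$ holds whenever $\|\vx-\vx_0\|_2\le\delta$, so taking $(\vx_0,\vy_0)$ to be the nearest points in $\delta$-nets of the two spheres gives the lower bound $g(\vx,\vy)\ge\sum_\ell\mathbbm{1}_{\hat\bl^\T\vx_0\le-\delta}\mathbbm{1}_{\hat\cl^\T\vy_0\le-\delta}$. The right-hand side is again binomial, now with success probability $p_\delta=\PP(\hat\bl^\T\vx_0\le-\delta)\,\PP(\hat\cl^\T\vy_0\le-\delta)$. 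Since $\hat\bl^\T\vx_0$ is the first coordinate of a uniform point on $\mathbb{S}^{K-1}$, whose density near $0$ is of order $\sqrt K$, one has $\PP(\hat\bl^\T\vx_0\le-\delta)\ge\tfrac12-O(\delta\sqrt K)$, and similarly with $N$. To keep $p_\delta$ safely above $0.2$ one is forced to take $\delta\asymp(K+N)^{-1/2}$, which makes the product net have cardinality $(3/\delta)^{K+N}=(C\sqrt{K+N})^{K+N}$, hence log-cardinality of order $(K+N)\log(K+N)$; a union bound of the per-point tail $e^{-\kappa L}$ then only closes when $L\gtrsim(K+N)\log(K+N)$. \emph{This spurious logarithmic factor, forced by the discontinuity of the indicators, is the main obstacle to the stated threshold $L\ge C(K+N)$.}

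To remove the logarithm I would replace the net by a count of realizable sign patterns. The key observation is that $g(\vx,\vy)$ depends on $(\vx,\vy)$ only through the pair of sign vectors $\sign(\mB\vx)\in\{\pm1\}^L$ and $\sign(\mC\vy)\in\{\pm1\}^L$, since $g$ merely counts the indices $\ell$ at which both signs are negative. The same Schläfli cell count already used in Lemma \ref{lem:sphere-covering} bounds the number of distinct patterns $\sign(\mB\vx)$, as $\vx$ ranges over $\R^K$, by $2\sum_{k=0}^{K-1}\binom{L-1}{k}$, and likewise for $\sign(\mC\vy)$ with $N$. By the binary-entropy estimate $\sum_{k\le K}\binom{L}{k}\le2^{L\,H(K/L)}$, the product of these two counts is at most $e^{\kappa' L}$ with a constant $\kappa'=\kappa'\bigl((K+N)/L\bigr)$ that tends to $0$ as $(K+N)/L\to0$; choosing the absolute constant $C$ in $L\ge C(K+N)$ large makes $\kappa'$ as small as we like, in particular smaller than $\kappa$. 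Thus the effective complexity is $e^{o(L)}$ rather than the net's $e^{(K+N)\log(K+N)}$.

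The remaining subtlety is that the realized sign patterns are themselves random, so one cannot simply union-bound the per-point tail over a fixed list of patterns. I would resolve this in the standard empirical-process fashion: a symmetrization step introducing Rademacher sign flips — exactly the negation symmetry exploited in the proof of Lemma \ref{lem:sphere-covering} — followed by a finite-class maximal inequality applied conditionally, with the class size controlled by the growth function $e^{\kappa' L}$ above. This yields $\sup_{\vx,\vy}\bigl|\tfrac1L g(\vx,\vy)-\tfrac14\bigr|\le0.05$, and hence $\min_{\vx,\vy}g(\vx,\vy)\ge0.2L$, on an event of probability at least $1-e^{-cL}$ whenever $L\ge C(K+N)$, with $C,c$ absolute. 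Equivalently, the same three ingredients — bounded growth function, symmetrization, and sub-Gaussian concentration — can be packaged as a VC-dimension uniform law of large numbers for the class of intersections of two homogeneous halfspaces, whose VC dimension is $O(K+N)$, giving the dimension-free deviation rate $\sqrt{(K+N)/L}$ that underlies the $L\gtrsim K+N$ threshold.
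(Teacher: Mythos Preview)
Your sign-pattern/VC route is correct and yields the lemma with the stated threshold: the class of products of two homogeneous halfspaces has shatter coefficient at most $\bigl(\sum_{k<K}\binom{L}{k}\bigr)\bigl(\sum_{k<N}\binom{L}{k}\bigr)\le e^{\kappa' L}$ with $\kappa'\to0$ as $(K+N)/L\to0$, so symmetrization followed by a conditional union bound over the realized sign patterns gives uniform deviation $\le0.05$ with probability $\ge1-e^{-cL}$ once $L\ge C(K+N)$.

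The paper takes a different and more elementary path that avoids empirical-process machinery. It replaces $\mathbbm{1}_{z\le0}$ by the $10$-Lipschitz ramp $w(z)$ equal to $1$ for $z<-0.1$, linear on $[-0.1,0]$, and $0$ for $z>0$, and lower-bounds the count by $g(\vx,\vy)=\sum_\ell w(\blt\vx)\,w(\clt\vy)$. Because $\blt\vx\sim\mathcal N(0,1)$ has \emph{constant}-scale density near zero --- unlike the $\sqrt K$-scale density in your normalized-margin attempt --- a constant-width ramp drops the mean only from $0.25$ to $\approx0.23$, while making $g$ Lipschitz with constant $O(L)$ on the high-probability event $\|\mB\|,\|\mC\|\le3\sqrt L$ (the bound passes through $\|\mB\|$ via Cauchy--Schwarz on $\sum_\ell|\blt(\vx_1-\vx_2)|$, not through $\max_\ell\|\bl\|$). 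A \emph{fixed}-resolution $\epsilon$-net then suffices, with log-cardinality $O(K+N)$, and Hoeffding plus a union bound closes the argument. Both approaches kill the spurious logarithm you correctly diagnosed in the naive margin net, but the paper does it by smoothing the indicator and exploiting the operator norm, whereas you do it by replacing nets with growth-function counts. The paper's proof is self-contained from Hoeffding and a Gaussian operator-norm bound; yours is more structural and would transfer unchanged to other functionals of the joint sign pattern.
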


\begin{proof}

Let $f(\vx,\vy) = \sum_{\ell = 1}^{L} \mathbbm{1}_{\blt \vx\leq 0}\mathbbm{1}_{\clt \vy\leq 0}$. We will consider a continuous relaxation of $f(\vx,\vy)$. Let 
\[
w(z) = \left\{\begin{array}{ll}
			1 &  z<-0.1\\
			-\frac{z}{0.1} &  -0.1\leq z\leq0\\
			0 &  z>0
		\end{array}\right.
\]
and $g(\vx,\vy) = \sum_{\l=1}^{L} w(\blt \vx)w(\clt \vy)$. Note that $f(\vx,\vy)\geq g(\vx,\vy)$ for all $(\vx,\vy)$. So, it is sufficient to show that with probability at least $1-e^{-cL}$,
\begin{equation}\label{cont sphere}
	\min_{x \in \mathbb{S}^{K-1}, y \in \mathbb{S}^{N-1}} \sum_{\l=1}^{L} w(\blt \vx)w(\clt \vy) \geq 0.2L .
\end{equation} 
if $L\geq C(K+N)$. 

Let $\beta_{\l}(\vx,\vy) = w(\blt \vx)w(\clt \vy)$. We first compute $\mathbb{E}[\beta_{\l}(\vx,\vy)]$ for a fixed $\vx \in \mathbb{S}^{K-1}$ and $\vy \in \mathbb{S}^{N-1}$. Without loss of generality, let $\vx = \ve_{1}$ and $\vy = \ve_1$.
\begin{align*}
\mathbb{E}[\beta_{\l}(\vx,\vy)] = &\mathbb{E}[w(\blone)w(\clone)] \\
= & \left(\mathbb{E}[w(\blone)]\right)^2\\
= &\left(\frac{1}{\sqrt{2\pi}}\int_{-\infty}^{-.1}e^{-\frac{s^2}{2}}ds + \frac{1}{\sqrt{2\pi}}\int_{-.1}^{0}\left(-\frac{s}{0.1}\right)e^{-\frac{s^2}{2}}ds \right)^2\\
\geq &\ 0.23,
\end{align*}
where the second inequality follows by independence of $\blone$ and $\clone$. So, for a fixed $(\vx,\vy) \in \mathbb{S}^{K-1}\times \mathbb{S}^{N-1}$, we have  $\mathbb{E}[g(\vx,\vy)] \geq 0.23L$. 

We will now show that for a fixed $(\vx,\vy) \in \mathbb{S}^{K-1}\times \mathbb{S}^{N-1}$, $g(\vx,\vy) \geq 0.22L$  with high probability. Fix $(\vx,\vy) \in \mathbb{S}^{K-1}\times \mathbb{S}^{N-1}$. Since $g(\vx,\vy)$ is bounded, $g(\vx,\vy)$ is sub-gaussian. Let $\alpha$ be the sub-gaussian norm of $\beta_{\l}$ after centering. Thus, by Hoeffding-type inequality (see Proposition 5.10 in \citet{vershynin10in}), $\mathbb{P}\left\{\left|g(\vx,\vy)-\mathbb{E}[g(\vx,\vy)]\right| \geq t\right\} \leq e\cdot e^{-\frac{ct^2}{\alpha^2L}}$, where $c>0$ is a absolute constant. So, $\mathbb{P}\left\{g(\vx,\vy)\leq 0.23L-\delta L\right\} \leq e\cdot e^{-\frac{c\delta^2L}{\alpha^2}}$. Pick $\delta = 0.01$, then for any fixed $(\vx,\vy) \in \mathbb{S}^{K-1}\times \mathbb{S}^{N-1}$, we have 
\[\mathbb{P}\left\{g(\vx,\vy)\leq 0.22L\right\} \leq e\cdot e^{-cL}\]
 for some $c>0$.

We will now show that for all $(\vx,\vy)$ in an $\epsilon$-net, $g(\vx,\vy) \geq 0.22L$  with high probability. Let $\mathcal{N}_{\epsilon}$ be an $\epsilon$-net of $\mathbb{S}^{K-1}\times \mathbb{S}^{N-1}$ such that $|\mathcal{N}_\epsilon |\leq (1+\frac{2\sqrt{2}}{\epsilon})^{K+N}$.  By lemma 5.2 in \cite{vershynin10in}, such an $\epsilon$-net exists. So 
\begin{equation}\label{e-net}
\mathbb{P}\left\{\min_{(\vx,\vy)\in \mathcal{N}_\epsilon} g(\vx)\geq 0.22L\right\} \geq 1- e\cdot e^{-cL +(N+K)\log(1+\frac{2\sqrt{2}}{\epsilon})}.
\end{equation}
If $L\geq \frac{2}{c}(1+\log(1+\frac{2\sqrt{2}}{\epsilon}))(K+N)$ then 
\begin{equation}
\mathbb{P}\left\{\min_{(\vx,\vy)\in \mathcal{N}_\epsilon} g(\vx,\vy)\geq 0.22L\right\} \geq 1- e\cdot e^{-\frac{cL}{2}}
\end{equation}

Lastly, we will show that for all $(\vx,\vy) \in \mathbb{S}^{K-1}\times \mathbb{S}^{N-1}$, $g(\vx,\vy) \geq 0.2L$  with high probability. We first show that $g(\vx,\vy)$ is $30\sqrt{2}L$-Lipschitz with high probability. This holds because if $(\vx_1,\vy_1), (\vx_2,\vy_2) \in \mathbb{R}^{K-1}\times \mathbb{R}^{N-1}$ then
\begin{align}
|g(\vx_1,\vy_1) - g(\vx_2,\vy_2)| \leq &\sum_{\l = 1}^{L}|w(\blt \vx_1)w(\clt \vy_1)-w(\blt \vx_2)w(\clt \vy_2)|\\
= &\sum_{\l = 1}^{L}\left|\left(w(\blt \vx_1)-w(\blt \vx_2)\right)w(\clt \vy_1) +\left(w(\clt \vy_1)-w(\clt \vy_2)\right)w(\blt \vx_2)\right|\\
\leq &\sum_{\l = 1}^{L}\left|\left(w(\blt \vx_1)-w(\blt \vx_2)\right)w(\clt \vy_1)\right| + \left|\left(w(\clt \vy_1)-w(\clt \vy_2)\right)w(\blt \vx_2)\right|\\
\leq &\sum_{\l = 1}^{L}\left|w(\blt \vx_1)-w(\blt \vx_2)\right| + \left|w(\clt \vy_1)-w(\clt \vy_2)\right|\\
\leq & 10\sum_{\l = 1}^{L}|\blt (\vx_1-\vx_2)| +10\sum_{\l = 1}^{L}|\clt (\vy_1-\vy_2)|\\	
\leq &10\sqrt{L} \cdot \sqrt{\sum_{\l=1}^{L}(\blt (\vx_1-\vx_2))^2} +10\sqrt{L} \cdot \sqrt{\sum_{\l=1}^{L}(\clt (\vy_1-\vy_2))^2}\\
= &10\sqrt{L}\left(\|\mB(\vx_1-\vx_2)\|_{2}+\|\mC(\vy_1-\vy_2)\|_{2}\right)\\
\leq &10\sqrt{L}\left(\|\mB\|\|(\vx_1-\vx_2)\|_{2}+\|\mC\|\|(\vy_1-\vy_2)\|_{2}\right),
\end{align}
where the fourth line follows because $|w(z)| \leq 1$ for all $z \in \mathbb{R}$, the fifth line follows because $w$ is 10-Lipschitz and the sixth line follows from Cauchy-Schwarz inequality. By Corollary 5.35 in \cite{vershynin10in}, there exists events $E_1$ and $E_2$ each with probability at least $1-2e^{-\frac{L}{2}}$, on which $\|\mB\|\leq 3\sqrt{L}$ and $\|\mC\| \leq 3\sqrt{L}$,	 respectively. So, on $E_{1}\cap E_{2}$ we have
\begin{align*}
	|g(\vx_1,\vy_1) - g(\vx_2,\vy_2)| & \leq 30L(\|\vx_1-\vx_2\|_2+\|\vy_1-\vy_2\|_2)\\
	&\leq  30\sqrt{2}L\|(\vx_1,\vy_1)-(\vx_2,\vy_2)\|_{2}.	
\end{align*}
Take $\epsilon = \frac{0.01}{30\sqrt{2}}$. For any $(\vx_1,\vy_1) \in \mathbb{S}^{K-1}\times \mathbb{S}^{N-1}$, pick $(\vx_2,\vy_2)\in \mathcal{N}_{\epsilon}$ such that $\|(\vx_1,\vy_1)-(\vx_2,\vy_2)\|_{2} \leq \epsilon$. On the event $E_{1}\cap E_{2}$ and the event given by \eqref{e-net}, we have that 
\begin{align}
	g(\vx_1,\vy_1)&\geq \min_{(\vx_2,\vy_2)\in\mathcal{N}_{\epsilon}}g(\vx_2,\vy_2)-30\sqrt{2}L\|(\vx_1,\vy_1)-(\vx_2,\vy_2)\|_{2}\\
		&\geq 0.2L.
\end{align} 
This occurs with the probability at least {\change $1-e^{-cL}$}, provided $L \geq \frac{2}{C}(1+\log(1+\frac{2\sqrt{2}}{\epsilon}))(K+N)$.	
\end{proof}

We now extend lemma \ref{good noise thm} to the case with arbitrary but non-zero $\ho$ and $\mo$. 
\begin{lemma}\label{good noise arbitrary}
Fix $\ho \in \mathbb{R}^K$ and $\mo \in \mathbb{R}^N$ such that $\ho \neq 0$ and $\mo \neq 0$. Let $\mB\in \mathbb{R}^{L\times K}$, $\mC\in \mathbb{R}^{L\times N}$ contain i.i.d $\mathcal{N}(0,1)$ entries and $y_{\ell}$ satisfy  \eqref{signal-recovery-problem} such that the noise is one-sided as in \eqref{one-sided}. Let $\epsilon = \|\vxi\|_\infty$. The minimizer $(\vh^*,\vm^*)$ of the BranchHull program \eqref{bh} is unique and if $L\geq C(K+N)$ then the minimizer satisfies
	\[\left\|\vh^* - \ho\sqrt{\frac{\|\mo\|_2}{\|\ho\|_2}}\right\|_2^2+\left\|\vm^* - \mo\sqrt{\frac{\|\ho\|_2}{\|\mo\|_2}}\right\|_2^2\leq 4(1-\sqrt{1-\epsilon}){\|\ho\|_2\|\mo\|_2}\]
	with probability at least $1-e^{-cL}	$. Here, $C$ and $c$ are absolute constants.
\end{lemma}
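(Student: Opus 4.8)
The plan is to reduce the general case to Lemma~\ref{good noise thm} --- which treats $\ho = \ve_1$, $\mo = \ve_1$ with unit norm --- by precisely the normalization-and-rotation argument used in the proof of Theorem~\ref{thm:sd}, and then to verify that the sphere-covering hypotheses \eqref{cond1}--\eqref{cond2} hold with probability $1 - e^{-cL}$ whenever $L \gtrsim K+N$ by invoking Lemma~\ref{sphere covering}. Uniqueness requires no new work: the feasible set of \eqref{bh} is closed and convex for any $(\ho,\mo)$, so it has a unique point closest to the origin.

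For the reduction, note that the balanced pair $\left(\ho\sqrt{\|\mo\|_2/\|\ho\|_2},\ \mo\sqrt{\|\ho\|_2/\|\mo\|_2}\right)$ and $(\ho,\mo)$ produce identical measurements $\yl = \langle \bl\clt, \homot\rangle$, so I may replace $(\ho,\mo)$ by its balanced version and assume $\|\ho\|_2 = \|\mo\|_2 = \rho$, where $\rho := \sqrt{\|\ho\|_2\|\mo\|_2}$. Writing $\vh = \rho\hat{\vh}$, $\vm = \rho\hat{\vm}$ and dividing the first constraint of \eqref{bh} through by $\rho^2$ --- legitimate because one-sidedness $\xi_\l\in[-1,0]$ keeps $\sign(\yl) = \sign(\blt\hhatnat\,\clt\mhatnat)$ --- turns the program into an instance of \eqref{bh} with the same noise $\vxi$ and unit-norm target $(\hhatnat,\mhatnat)$; the scale-invariant constraint $s_\l\blt\vh\ge0$ is unaffected. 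Finally, since $\Rhnat\bl$ and $\Rmnat\cl$ remain i.i.d.\ $\mathcal{N}(0,1)$ and $\ell_2$ norms are rotation invariant, I rotate $\hhatnat\mapsto\ve_1$ and $\mhatnat\mapsto\ve_1$ exactly as in Theorem~\ref{thm:sd}, landing on the hypotheses of Lemma~\ref{good noise thm}. That lemma gives $\|\hat{\vh}^* - \ve_1\|_2^2 + \|\hat{\vm}^* - \ve_1\|_2^2 \le 4(1-\sqrt{1-\epsilon})$; undoing the rotation (norm preserving) and the scaling by $\rho$ multiplies the right-hand side by $\rho^2 = \|\ho\|_2\|\mo\|_2$, which is exactly the claimed bound.

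It remains to secure \eqref{cond1}--\eqref{cond2}. By homogeneity it suffices to verify them for $(\dhtilde,\dmtilde)\in\mathbb{S}^{K-2}\times\mathbb{S}^{N-2}$, the cases with a vanishing component being trivial. The key observation is that $\blone$ is independent of $\bltilde$, so $\sign(\blone)\bltilde$ has the same law as $\bltilde$, i.e.\ i.i.d.\ standard Gaussian; likewise for $\sign(\clone)\cltilde$ and $-\sign(b_{k1})\bktilde$. Thus \eqref{cond1} asserts that for every $(\vx,\vy)$ there is an $\ell$ with $\langle\sign(\blone)\bltilde,\vx\rangle\le0$ and $\langle\sign(\clone)\cltilde,\vy\rangle\le0$, equivalently $\min_{\vx,\vy}\sum_\ell \mathbbm{1}_{\langle\sign(\blone)\bltilde,\vx\rangle\le0}\,\mathbbm{1}_{\langle\sign(\clone)\cltilde,\vy\rangle\le0}\ge1$, which is exactly what Lemma~\ref{sphere covering} delivers (since $0.2L\ge1$) after relabelling $K\to K-1$, $N\to N-1$. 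Condition \eqref{cond2} is identical once the sign of the first Gaussian family is flipped. A union bound over the two applications keeps the failure probability at $e^{-cL}$, and $L\ge C(K+N)$ dominates the $L\ge C'(K+N-2)$ demanded by the relabelled lemma.

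The only genuinely new ingredient beyond the machinery of Theorem~\ref{thm:sd} is this final translation, and it is where I expect the care to be needed: recognizing that the sign-weighted subvectors $\sign(\blone)\bltilde$ are themselves standard Gaussians so Lemma~\ref{sphere covering} applies verbatim, and tracking that the relevant spheres are $\mathbb{S}^{K-2}\times\mathbb{S}^{N-2}$ rather than $\mathbb{S}^{K-1}\times\mathbb{S}^{N-1}$. Everything else is bookkeeping inherited from the noiseless argument.
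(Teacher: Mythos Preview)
Your proposal is correct and follows essentially the same route as the paper: balance the norms, scale to unit norm, rotate $(\ho,\mo)$ to $(\ve_1,\ve_1)$ using rotational invariance of the Gaussian law, apply Lemma~\ref{good noise thm}, and then invoke Lemma~\ref{sphere covering} to secure conditions \eqref{cond1}--\eqref{cond2}. You are in fact more explicit than the paper on the last step --- the observation that $\sign(\blone)\bltilde$ inherits the i.i.d.\ Gaussian law of $\bltilde$, so Lemma~\ref{sphere covering} applies with dimensions $K-1,N-1$ --- whereas the paper simply asserts the implication; your union bound over two applications is harmless (indeed \eqref{cond2} for $(\dhtilde,\dmtilde)$ is just \eqref{cond1} for $(-\dhtilde,\dmtilde)$, so a single application already suffices).
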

\begin{proof}
	Without loss of generality let $\|\ho\|_{2} = \|\mo\|_2$, which is possible because for any $b_{\l}$, $c_\l$ and $\xi_\l$, we have that $\left(\ho\sqrt{\frac{\|\mo \|_2}{\|\ho\|_2}},\mo\sqrt{\frac{\|\ho\|_2}{\|\mo\|_2}}\right)$ and $(\ho,\mo)$ give equal values of $y_{\l} = \blt \ho \clt \mo(1+\xi_\l)$. 
	Further, without loss of  generality, we may take $\|\ho\|_{2} = \|\mo\|_2 = 1$. This is possible because of a similar line of argument as \eqref{norm1}.
	 
	Further, without loss of generality we may take $\ho = \ve_1$ and $\mo = \ve_1$. To see this is possible, let $R_{\hhatnat}$ and $R_{\mhatnat}$ be rotation matrices that map $\hhatnat$ and $\mhatnat$ to $\ve_1$, respectively. Letting $\bar{\vh} = R_{\hhatnat}\hat{\vh}, \ \bar{\vm} = R_{\mhatnat}\hat{\vm}$, and $\bar{s_\l} = \sign(\blt R_{\hhatnat}^\intercal \ve_1)$, BranchHull can be written as
	\begin{equation}\label{rotation invariant}
		\begin{aligned}
			&\min_{\bar{\vh}\in \mathbb{R}^{K}, \bar{\vm}\in \mathbb{R}^N} \|\ho\|_2^2\|R_{\hhatnat}^\intercal \bar{\vh}\|_2^2 + \|\mo\|_2^2\|R_{\mhatnat}^\intercal \bar{\vm}\|_2^2\\
			& \text{s.t. } \sign\left((R_{\hhatnat}\bl)^\intercal \ve_1(R_{\mhatnat}\cl)^\intercal \ve_1 \right)(R_{\hhatnat}\bl)^\intercal \bar{\vh}(R_{\mhatnat}\cl)^\intercal \bar{\vm}  \geq \left|(R_{\hhatnat}\bl)^\intercal \ve_1(R_{\mhatnat}\cl)^\intercal \ve_1(1+\xi_\l)\right|\\
			& \hspace{7mm} \bar{s_\l} \cdot \blt R_{\hhatnat}^\intercal \bar{\vh}\geq 0,\ \l \in [L].
		\end{aligned} 
	\end{equation}
	As $\ell_{2}$ norms are invariant to rotation and $R_{\hhatnat}\bl$ and $R_{\mhatnat}\cl$ have independent $\mathcal{N}(0,1)$ entries, we may take $(\ho, \mo) = (\ve_1, \ve_1)$.
	
	By Lemma \ref{good noise thm}, the minimizer $(\bar{\vh}^*,\bar{\vm}^*)$ of \eqref{rotation invariant} is unique and  satisfies
	\begin{equation}\label{error bound transformed}
		\left\|\bar{\vh}^* - \ve_1\right\|_2^2+\left\|\bar{\vm}^* - \ve_1\right\|_2^2\leq 4(1-\sqrt{1-\epsilon})
	\end{equation}

	if for all $(\dhtilde,\dmtilde)\in \mathbb{R}^{K-1}\times\mathbb{R}^{N-1}$, there exists $\ell$, $k \in [L]$ such that
	\begin{align}
		&\sign((R_{\hhatnat}\bl)^\intercal \ve_1)(\widetilde{R_{\hhatnat}\bl})^\intercal \dhtilde\leq0 \text{ and } \sign((R_{\hat\mo}\cl)^\intercal \ve_1)(\widetilde{R_{\hat\mo}\cl})^\intercal \dmtilde\leq 0   \label{cond1alt}\\ 
		&\sign((R_{\hat\ho}\vb_k)^\intercal \ve_1)(\widetilde{R_{\hat\ho}\vb_k})^\intercal\dhtilde\geq 0 \text{ and } \sign((R_{\hat\mo}\vc_k)^\intercal \ve_1)(\widetilde{R_{\hat\mo}\vc_k})^\intercal \dmtilde\leq 0. \label{cond2alt}
	\end{align}	
	By Lemma \ref{sphere covering}, there exists $\ell, \ k \in [L]$ that satisfy \eqref{cond1alt} and \eqref{cond2alt}, respectively, with probability at least $1-e^{-cL}$ if $L\geq C(K+N)$.
\end{proof}

We now present a proof of Theorem \ref{thm:noise}. {\change In Theorem \ref{thm:noise}, the noise $\xi_\l \in [-1,1]$ which is in contrast to $\xi_\l \in [-1, 0]$ in Lemma \ref{good noise arbitrary}. The key idea is measurements with noise that satisfy $\xi_\l \geq -1$ can be converted to measurements with noise in the interval $[-1,1]$. In order to see this, let 
\begin{align}
	\bar{s} &= \max_{\l \in [L]} \frac{\yl}{\hat{\yl}} = 1+\max_{\l \in [L]}\xi_{\l},\\
	s &= \max\{\bar{s},1\}\label{shift}\leq 1+\|\vxi\|_\infty,\\
	 \eta_\ell &= \frac{1}{s}(1-s+\xi_\l). \label{eta}
\end{align} 
We then consider the measurements $\yl = s\hat{\yl}(1+\eta_\l)$ for $\l \in [L]$. Because $s\hat{\yl}(1+\eta_\l) = \hat{\yl}(1+\xi_\l)$, the noisy measurements are the same, however the noise may be different.
}
\begin{proof}[{\bf Proof of Theorem \ref{thm:noise}}]
	As the noise of measurements $y_\l = \hat{y_\l}(1 + \xi_\l)$ may not be one-sided as in \eqref{one-sided}, we consider equivalent measurements $y_\l = s\hat{y_\l}(1 + \eta_\l)$, where $s$ and $\eta_\l$ are as defined in \eqref{shift} and \eqref{eta}, respectively. This turns the BranchHull program \eqref{bh} into
	\begin{equation}\label{shift BH}
		\begin{aligned}
			\min_{\vh \in \mathbb{R}^K,\ \vm\in \mathbb{R}^N} \|\vh\|_{2}^{2}+\|\vm\|_{2}^{2} \hspace{5mm} \text{ s.t. }\hspace{5mm}  &\sign(y_{\ell})\blt \vh \clt \vm \geq |s\hat{y_{\ell}}(1+\eta_\l)|, \\
			\vspace{-10mm} & s_{\ell}\cdot\blt \vh \geq 0,\ \ell\in [L].
	\end{aligned}	
	\end{equation}	
	First, we note that for all $\l \in [L]$,
	\begin{align}
		\eta_\l & =  \frac{1}{s}\left(1+\xi_\l-s\right)\\
		&\leq \frac{1}{s}(s-s)\\
		& = 0,
	\end{align}
	where the first inequality holds because $1+\xi_\l \leq 1+\max_{\ell \in [L]} \xi_\l \leq s$. Second, we have $\eta_\l \geq -1$ for all $\l \in [L]$, which follows directly from $\xi_\l \geq -1$ for all $\l$. Thus, the noise $\veta$ is one-sided and by Lemma \ref{good noise arbitrary}, the minimizer $(\vh^*,\vm^*)$ of \eqref{shift BH} is unique and if $L\geq C(K+N)$, the minimizer satisfies
	\begin{equation}\label{shift guarantee}
		\left\|\vh^* - \sqrt{s}\ho\sqrt{\frac{\|\mo\|_2}{\|\ho\|_2}}\right\|_2^2+\left\|\vm^* - \sqrt{s}\mo\sqrt{\frac{\|\ho\|_2}{\|\mo\|_2}}\right\|_2^2\leq 4(1-\sqrt{1-\delta})s\|\ho\|_2\|\mo\|_2	
	\end{equation}

	with probability at least $1-e^{-cL}	$. In \eqref{shift guarantee},
	\begin{align}
		\delta &\equiv \|\veta\|_\infty \nonumber\\ 
		& = -\min_{\l \in [L]}\left(\frac{1}{s}-1+\frac{\xi_{\l}}{s}\right)\nonumber\\
			& = -\frac{1}{s}(1+\min_{\l \in [L]}\xi_\l)+1\nonumber\\
			& \leq -\frac{1 - \epsilon}{s}+1, \label{delta bound}
	\end{align}
	 where the first equality holds because $\epsilon =\|\vxi\|_\infty \geq -\min_{\l \in[L]}\xi_\l$. Let $(\hc,\mc) = \left(\ho\sqrt{\frac{\|\mo\|_2}{\|\ho\|_2}},\mo\sqrt{\frac{\|\ho\|_2}{\|\mo\|_2}}\right)$. We now compute
	\begin{align}
		& \left(\left\|\vh^* -\hc\right\|_2^2+\left\|\vm^* - \mc\right\|_2^2\right)^{\frac{1}{2}} \notag\\
		= & \left(\left\|\vh^* - \sqrt{s}\hc+\sqrt{s}\hc-\hc\right\|_2^2+\left\|\vm^* - \sqrt{s}\mc+\sqrt{s}\mc-\mc\right\|_2^2\right)^{\frac{1}{2}} \notag\\
		\leq & \left(\left\|\vh^* - \sqrt{s}\hc\right\|_{2}^2+\left\|\vm^* - \sqrt{s}\mc\right\|_2^2\right)^{\frac{1}{2}}+\left(\left\|\sqrt{s}\hc-\hc\right\|_{2}^2+\left\|\sqrt{s}\mc-\mc\right\|_2^2\right)^{\frac{1}{2}}\label{triangle 1}\\
		\leq & \left(4(1-\sqrt{1-\delta})s\|\ho\|_2\|\mo\|_2\right)^{\frac{1}{2}}+ (\sqrt{s}-1)\left(\|\hc\|_{2}^2+\|\mc\|_{2}^2\right)^{\frac{1}{2}}\label{apply lemma}\\
		\leq & {\change\left(2\left(s\left(1-\sqrt{\frac{1-\epsilon}{s}}\right)\right)^{\frac{1}{2}}+ \sqrt{2}\left(\sqrt{s}-1\right)\right)\sqrt{\|\ho\|_{2}\|\mo\|_{2}}}\label{bound d}\\
		\leq & {\change\left(2\left(1+\epsilon-\sqrt{1-\epsilon^2}\right)^{\frac{1}{2}}+ \sqrt{2}\left(\sqrt{1+\epsilon}-1\right)\right)\sqrt{\|\ho\|_{2}\|\mo\|_{2}}} \label{bound s}\\
		\leq & \left(2\sqrt{2\epsilon}+\frac{\sqrt{2}\epsilon}{2}\right)\sqrt{\|\ho\|_{2}\|\mo\|_{2}}\label{approx e 1}\\
		\leq & 4\sqrt{\epsilon}\sqrt{\|\ho\|_{2}\|\mo\|_{2}},\label{approx e 2}
	\end{align}
where \eqref{triangle 1} holds because of triangle inequality, \eqref{apply lemma} holds because of \eqref{shift guarantee}, \eqref{bound d} holds because of \eqref{delta bound} and $\|\hc\|_{2}=\|\mc\|_{2} = \sqrt{\|\ho\|_2\|\mo\|_2}$, \eqref{bound s} holds because of \eqref{shift} and \eqref{approx e 1} and \eqref{approx e 2} holds because for all $\epsilon \in [0,1]$, we have {\change $1+\epsilon -\sqrt{1-\epsilon^2}\leq 2\epsilon$, $\sqrt{1+\epsilon}-1\leq \frac{\epsilon}{2}$} and $\epsilon \leq \sqrt{\epsilon}$.
\end{proof}

\section{Numerical Results}\label{simulation}
In this section, we provide two numerical studies on synthetic data. The first study numerically verifies Theorem \ref{thm:sd} and the second study shows that the BranchHull program \eqref{bh} is robust to small dense noise.  {\change For both simulations, we used an interior point solver available in Matlab to solve the corresponding BranchHull program.}

For the first simulation, consider the following measurements: fix $N \in \{10,20,\dots, 150\}$, $L \in \{10,70,\dots,850\}$ and let $K = N$. Let the target signal $(\ho, \mo) \in \mathbb{R}^{K}\times \mathbb{R}^{N}$ be such that $\ho = e_1$ and $\mo = e_1$. Let $\mB \in \mathbb{R}^{L\times K}$ and $\mC \in \mathbb{R}^{L\times N}$ such that $B_{ij}\sim \mathcal{N}(0,1)$ and $C_{ij}\sim \mathcal{N}(0,1)$. Lastly, let $y_\l = \mB\ho\circ \mC\mo$ and $s = \text{sign}(\mB\ho)$.

Figure \ref{phase_plot_noiseless} shows the fraction of successful recoveries from 10 independent trials for the bilinear inverse problem \eqref{signal-recovery-problem} from data as described above. Black squares correspond to no successful recovery and white squares correspond to 100\% successful recovery. Let $(\vh^*,\vm^*)$ be the output of \eqref{bh}. For each trial, we say \eqref{bh} successfully recovers the target signal if  $\|(\vh^*,\vm^*)-(\ve_1,\ve_1)\|_{2} <10^{-5}$. The area to the left of the line corresponds to the oversampling  required for successful recovery stated in Theorem \ref{thm:sd}. The figures shows the linear relationship between number of measurements $L$ and size of target signals $K+N$ for successful recovery.
\begin{figure}[H]
		\centering
		\includegraphics[scale  = 0.5]{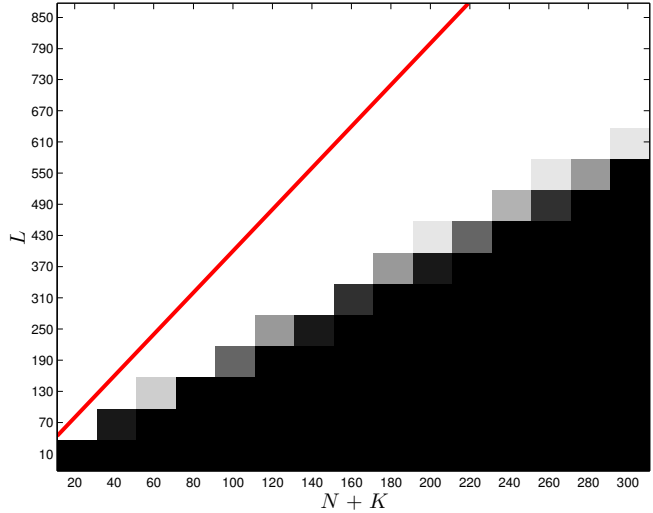}
		\captionsetup{oneside,margin={0em,0em}}
		\caption{ The empirical recovery probability from synthetic data with total measurements $L$ as a function of size of the target signals $K+N$. The shades of black and white represents the fraction of successful simulation. White blocks correspond to successful recovery and black blocks correspond to unsuccessful recovery. Each block corresponds to the average from 10 independent trials. The area to the left of the line satisfies {\change $L >2(K+N),$} which is the theoretical successful recovery bound stated in Theorem \ref{thm:sd}.}
		\label{phase_plot_noiseless}
\end{figure}

For the noisy simulation, consider the following measurements: fix $N = K = 20$ and $L\in \{10,20,\dots,200\}$. Let the target signal $(\ho, \mo) \in \mathbb{R}^{K}\times \mathbb{R}^{N}$ be such that $\ho \sim \mathcal{N}(0,1)$ and $\mo \sim \mathcal{N}(0,1)$. Let $\mB \in \mathbb{R}^{L\times K}$ and $\mC \in \mathbb{R}^{L\times N}$ be such that $B_{ij}\sim \mathcal{N}(0,1)$ and $C_{ij}\sim \mathcal{N}(0,1)$. Fix $\alpha \in \{0,0.1,\dots, 1\}$ and let $\xi \in \mathbb{R}^{L}$ such that $\xi_\l \sim \text{Uniform}([-\alpha,\alpha])$. Lastly, let $y_\l = \mB\ho\circ \mC\mo\circ ({\bf 1} + \xi)$ and $s = \text{sign}(\mB\ho)$. 

Figure \ref{vary_noise_level} shows the maximum relative error from 10 independent trials for the bilinear inverse problem \eqref{signal-recovery-problem} from data as described above. Each curve corresponds to different noise level $\alpha${\change, which controls the size of noise level $\epsilon$ defined in \eqref{epsilon}}. The plot shows the effect of different levels of noise on the maximum relative error as a function of the sampling ratio $\frac{L}{K+N}$.  {\change Empirically, sampling ratio of about 2.5 is sufficient for stable estimation of the target signal. Additionally, the spacing between the piecewise lines for large sampling ratio is uniform in Figure \ref{vary_noise_level}. This suggests that the relationship between recovery error and the noise level $\epsilon$ is linear, which is in contrast to Theorem \ref{thm:noise} which shows that the recovery error depends as $\sqrt{\epsilon}$ for small noise.
} 
\begin{figure}[H]
		\centering
		\includegraphics[scale  = 0.5]{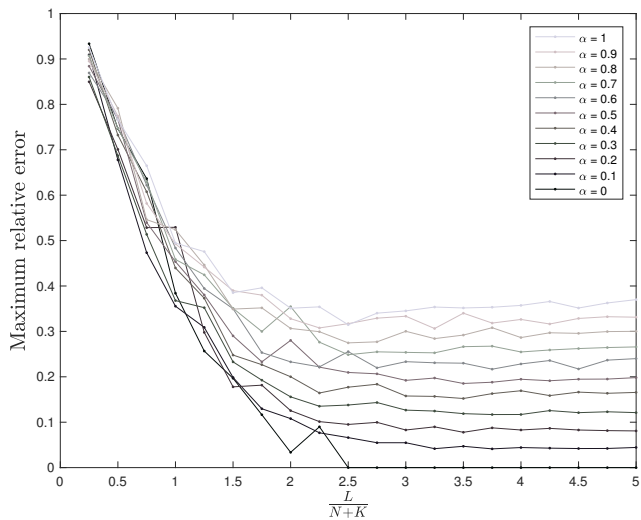}
		\captionsetup{oneside,margin={0em,0em}}
		\caption{The empirical recovery error from synthetic data as a function of sampling ratio $\frac{L}{K+N}$. The size of the signals, $N$ and $K$, are fixed to 20. Different piecewise line corresponds to different noise level $\alpha$.}
		\label{vary_noise_level}
\end{figure}

\subsubsection*{Acknowledgments}
PH acknowledges funding by the grant NSF DMS-1464525.
\bibliography{Bibliography1}

\begin{thebibliography}{30}
\providecommand{\natexlab}[1]{#1}
\providecommand{\url}[1]{\texttt{#1}}
\expandafter\ifx\csname urlstyle\endcsname\relax
  \providecommand{\doi}[1]{doi: #1}\else
  \providecommand{\doi}{doi: \begingroup \urlstyle{rm}\Url}\fi

\bibitem[Ahmed et~al.(2014)Ahmed, Recht, and Romberg]{ahmed2012blind}
Ali Ahmed, Benjamin Recht, and Justin Romberg.
\newblock Blind deconvolution using convex programming.
\newblock \emph{{IEEE} Trans. Inform. Theory}, 60\penalty0 (3):\penalty0
  1711--1732, 2014.

\bibitem[Stockham et~al.(1975)Stockham, Cannon, and
  Ingebretsen]{stockham1975blind}
Thomas~G Stockham, Thomas~M Cannon, and Robert~B Ingebretsen.
\newblock Blind deconvolution through digital signal processing.
\newblock \emph{Proceedings of the IEEE}, 63\penalty0 (4):\penalty0 678--692,
  1975.

\bibitem[Kundur and Hatzinakos(1996)]{kundur1996blind}
Deepa Kundur and Dimitrios Hatzinakos.
\newblock Blind image deconvolution.
\newblock \emph{IEEE signal processing magazine}, 13\penalty0 (3):\penalty0
  43--64, 1996.

\bibitem[Aghasi et~al.(2016)Aghasi, Heshmat, Redo-Sanchez, Romberg, and
  Raskar]{aghasi2016sweep}
Alireza Aghasi, Barmak Heshmat, Albert Redo-Sanchez, Justin Romberg, and Ramesh
  Raskar.
\newblock Sweep distortion removal from terahertz images via blind
  demodulation.
\newblock \emph{Optica}, 3\penalty0 (7):\penalty0 754--762, 2016.

\bibitem[Fienup(1982)]{fienup1982phase}
James~R Fienup.
\newblock Phase retrieval algorithms: a comparison.
\newblock \emph{Applied optics}, 21\penalty0 (15):\penalty0 2758--2769, 1982.

\bibitem[Tosic and Frossard(2011)]{tosic2011dictionary}
Ivana Tosic and Pascal Frossard.
\newblock Dictionary learning.
\newblock \emph{IEEE Signal Processing Magazine}, 28\penalty0 (2):\penalty0
  27--38, 2011.

\bibitem[Hoyer(2004)]{hoyer2004non}
Patrik~O Hoyer.
\newblock Non-negative matrix factorization with sparseness constraints.
\newblock \emph{Journal of machine learning research}, 5\penalty0
  (Nov):\penalty0 1457--1469, 2004.

\bibitem[Lee and Seung(2001)]{lee2001algorithms}
Daniel~D Lee and H~Sebastian Seung.
\newblock Algorithms for non-negative matrix factorization.
\newblock In \emph{Advances in neural information processing systems}, pages
  556--562, 2001.

\bibitem[Ling and Strohmer(2015)]{ling2015self}
Shuyang Ling and Thomas Strohmer.
\newblock Self-calibration and biconvex compressive sensing.
\newblock \emph{Inverse Problems}, 31\penalty0 (11):\penalty0 115002, 2015.

\bibitem[Castro(2015)]{castro2015tightening}
Pedro~M Castro.
\newblock Tightening piecewise mccormick relaxations for bilinear problems.
\newblock \emph{Computers \& Chemical Engineering}, 72:\penalty0 300--311,
  2015.

\bibitem[Chen et~al.(2006)Chen, Yin, Zhou, Comaniciu, and
  Huang]{Chen2006variation}
T.~Chen, Wotao Yin, Xiang~Sean Zhou, D.~Comaniciu, and T.~S. Huang.
\newblock Total variation models for variable lighting face recognition.
\newblock \emph{IEEE Transactions on Pattern Analysis and Machine
  Intelligence}, 28\penalty0 (9):\penalty0 1519--1524, Sept 2006.
\newblock ISSN 0162-8828.
\newblock \doi{10.1109/TPAMI.2006.195}.

\bibitem[Ahmed and Demanet(2016)]{ahmed2016leveraging}
Ali Ahmed and Laurent Demanet.
\newblock Leveraging diversity and sparsity in blind deconvolution.
\newblock \emph{arXiv preprint arXiv:1610.06098}, 2016.

\bibitem[Netrapalli et~al.(2013)Netrapalli, Jain, and
  Sanghavi]{netrapalli2013phase}
Praneeth Netrapalli, Prateek Jain, and Sujay Sanghavi.
\newblock Phase retrieval using alternating minimization.
\newblock In \emph{Advances Neural Inform. Process. Syst.}, pages 2796--2804,
  2013.

\bibitem[Sun et~al.(2016)Sun, Qu, and Wright]{sun2016geometric}
Ju~Sun, Qing Qu, and John Wright.
\newblock A geometric analysis of phase retrieval.
\newblock In \emph{Information Theory (ISIT), 2016 IEEE International Symposium
  on}, pages 2379--2383. IEEE, 2016.

\bibitem[Cand\`es et~al.(2015)Cand\`es, Li, and
  Soltanolkotabi]{candes2014phase}
Emmanuel Cand\`es, Xiaodong Li, and Mahdi Soltanolkotabi.
\newblock Phase retrieval via wirtinger flow: Theory and algorithms.
\newblock \emph{IEEE Trans. Inform. Theory}, 61\penalty0 (4):\penalty0
  1985--2007, 2015.

\bibitem[Chen and Cand\`es(2015)]{CC15}
Yuxin Chen and Emmanuel Cand\`es.
\newblock Solving random quadratic systems of equations is nearly as easy as
  solving linear systems.
\newblock In \emph{Advances Neural Inform. Process. Syst.}, pages 739--747,
  2015.

\bibitem[Wang et~al.(2018)Wang, Giannakis, and Eldar]{wang2016solving}
G.~Wang, G.~B. Giannakis, and Y.~C. Eldar.
\newblock Solving systems of random quadratic equations via truncated amplitude
  flow.
\newblock \emph{IEEE Transactions on Information Theory}, 64\penalty0
  (2):\penalty0 773--794, Feb 2018.
\newblock ISSN 0018-9448.
\newblock \doi{10.1109/TIT.2017.2756858}.

\bibitem[Li et~al.(2016)Li, Ling, Strohmer, and Wei]{li2016rapid}
Xiaodong Li, Shuyang Ling, Thomas Strohmer, and Ke~Wei.
\newblock Rapid, robust, and reliable blind deconvolution via nonconvex
  optimization.
\newblock \emph{CoRR}, abs/1606.04933, 2016.

\bibitem[Tu et~al.(2016)Tu, Boczar, Simchowitz, Soltanolkotabi, and
  Recht]{tu2015low}
Stephen Tu, Ross Boczar, Max Simchowitz, Mahdi Soltanolkotabi, and Benjamin
  Recht.
\newblock Low-rank solutions of linear matrix equations via procrustes flow.
\newblock In Maria~Florina Balcan and Kilian~Q. Weinberger, editors,
  \emph{Proceedings of the 33rd International Conference on International
  Conference on Machine Learning - Volume 48}, volume~48 of \emph{Proceedings
  of Machine Learning Research}, pages 964--973. PMLR, 2016.

\bibitem[Bahmani and Rombger(2017)]{bahmani2016phase}
Sohail Bahmani and Justin Rombger.
\newblock {Phase Retrieval Meets Statistical Learning Theory: A Flexible Convex
  Relaxation}.
\newblock In Aarti Singh and Jerry Zhu, editors, \emph{Proceedings of the 20th
  International Conference on Artificial Intelligence and Statistics},
  volume~54 of \emph{Proceedings of Machine Learning Research}, pages 252--260.
  PMLR, 20-22 Apr 2017.

\bibitem[Goldstein and Studer(2018)]{goldstein2016phasemax}
T.~Goldstein and C.~Studer.
\newblock Phasemax: Convex phase retrieval via basis pursuit.
\newblock \emph{IEEE Transactions on Information Theory}, 64\penalty0
  (4):\penalty0 2675--2689, April 2018.
\newblock ISSN 0018-9448.
\newblock \doi{10.1109/TIT.2018.2800768}.

\bibitem[Hand and Voroninski(2016{\natexlab{a}})]{hand2016elementary}
Paul Hand and Vladislav Voroninski.
\newblock An elementary proof of convex phase retrieval in the natural
  parameter space via the linear program phasemax.
\newblock \emph{arXiv preprint arXiv:1611.03935}, 2016{\natexlab{a}}.

\bibitem[McCormick(1976)]{mccormick1976computability}
Garth~P McCormick.
\newblock Computability of global solutions to factorable nonconvex programs:
  Part i -- convex underestimating problems.
\newblock \emph{Mathematical programming}, 10\penalty0 (1):\penalty0 147--175,
  1976.

\bibitem[Quesada and Grossmann(1995)]{quesada1995global}
Ignacio Quesada and Ignacio~E Grossmann.
\newblock A global optimization algorithm for linear fractional and bilinear
  programs.
\newblock \emph{Journal of Global Optimization}, 6\penalty0 (1):\penalty0
  39--76, 1995.

\bibitem[Horst and Tuy(2013)]{horst2013global}
Reiner Horst and Hoang Tuy.
\newblock \emph{Global optimization: Deterministic approaches}.
\newblock Springer Science \& Business Media, 2013.

\bibitem[Hand and Voroninski(2016{\natexlab{b}})]{hand2016corruption}
Paul Hand and Vladislav Voroninski.
\newblock Corruption robust phase retrieval via linear programming.
\newblock \emph{arXiv preprint arXiv:1612.03547}, 2016{\natexlab{b}}.

\bibitem[Wendel(1962)]{wendel1962problem}
James~G Wendel.
\newblock A problem in geometric probability.
\newblock \emph{Math. Scand}, 11:\penalty0 109--111, 1962.

\bibitem[Schl{\"a}fli(1953)]{schlafli1953gesammelte}
Ludwig Schl{\"a}fli.
\newblock \emph{Gesammelte mathematische Abhandlungen: 1814-1895. 2}.
\newblock Birkh{\"a}user, 1953.

\bibitem[Wasserman(2013)]{wasserman2013all}
Larry Wasserman.
\newblock \emph{All of statistics: a concise course in statistical inference}.
\newblock Springer Science \& Business Media, 2013.

\bibitem[Vershynin(2012)]{vershynin10in}
R.~Vershynin.
\newblock Introduction to the non-asymptotic analysis of random matrices.
\newblock In Yonina~C Eldar and Gitta Kutyniok, editors, \emph{Compressed
  sensing: theory and applications}. Cambridge University Press, 2012.

\end{thebibliography}
\end{document}